\newcolumntype{Y}{>{\centering\arraybackslash}X}
\theoremstyle{plain}
\newtheorem{definition}{Definition}
\newtheorem{lemma}{Lemma}
\newtheorem{proposition}{Proposition}
\newtheorem{corollary}{Corollary}
\newcommand{\cC}{\mathscr C}
\newcommand{\vV}{\mathcal V}
\newcommand{\EE}{\mathbbm E}
\newcommand{\var}{\mathbbm V}
\title{Trade Networks and the Rise of a Dominant Currency\thanks{We would like to thank our colleagues who gave us useful comments, in particular,	Kozo Ueda, John Stachurski, and Hajime Tomura. This work was supported by JSPS KAKENHI Grant Number JP22H00849. Corresponding author: Lien Pham. Nishi-Waseda Bldg.7F, 1-21-1 Nishi-Waseda, Shinjuku-ku, Tokyo 169-0051 Japan, \texttt{phamlien@ruri.waseda.jp} }}
\author[a]{Tomoo Kikuchi}
\author[b]{Lien  Pham}
\affil[a,b]{Graduate School of Asia-Pacific Studies, Waseda University}
\date{\today}
\begin{document}

\maketitle

\thispagestyle{empty}

\begin{abstract} 
	\noindent 
We develop a model where currency issuers provide liquidity, while users in a trade network choose currency usage for trade settlement. We identify a feedback mechanism where a user's currency preference spillovers to others and increases the issuer's commitment to liquidity provision, which in turn increases the adoption of the currency.  Our findings highlight not only the advantage of the incumbent issuer in maintaining dominance, but also the conditions that lead to the rise and fall of dominant currencies.  Our framework offers testable implications for the share of global settlement currencies, the network structure, and the strategy of issuers.
	
	\vspace{1ex}
	
	\noindent\textbf{Keywords:} global currency; trade network; subgame perfect equilibrium; US-China competition

	\vspace{1ex}
	
	\noindent\textbf{JEL Classification:} C72; D62; D85; E42; F33.

\end{abstract}

\clearpage

\pagenumbering{arabic}

\section{Introduction}

Since the end of World War II, the United States dollar (USD) has been the dominant currency for international payments, invoicing, and foreign reserves.
The USD shapes the cost of trade, the access to international capital, and the vulnerability to US monetary policies, granting the US political power to impose financial sanctions.
Today, the US appears to be in the final stage of a cycle in which global powers rise---first by expanding international trade, then by developing dominant financial centers, and finally by issuing the world’s leading reserve currency \citep{dalio2021changing}. On the other hand, China appears to be in the first phase of this cycle. The rise and fall of dominant currencies
will have far-reaching economic and political consequences. 
 
Since 2013, China has surpassed the US in global trade share, raising questions about whether it could challenge the USD's dominance. To support the international use of the renminbi (RMB), China has offered loans and aids to build infrastructure globally and facilitated trade settlement in RMB through initiatives such as the Belt and Road Initiative (BRI) and the Asian Infrastructure Investment Bank (AIIB), while also signing bilateral currency swap agreements and launching Shanghai crude oil futures. Yet, China
maintains a relatively closed capital account, and many of its initiatives are regional or bilateral---lacking the global scope of institutions built under the Bretton Woods system by the US.
Consequently, despite China’s dominant trade share, the use of the RMB as a global payment currency remains remarkably limited. In April 2025, only 7\% of international trade was settled in RMB, less than one-tenth of 81\% settled in USD.\footnote{Source: SWIFT.} This disparity underscores that a dominant trade volume does not necessarily translate into a strong commitment to global provision of liquidity or into global adoption of the currency.

Under what conditions can we expect the rise and fall of global currencies? 
To answer this question, we integrate the network game introduced in \cite{ballester2006s} into a framework of strategic competition where liquidity provision is costly and currency choice exhibits strategic complementarity---users prefer currencies widely accepted by their trade partners.   
Even if a key trade partner prefers a particular currency, it may not be optimal for a country to adopt it unless it can also be used for settling trade with other partners.
The network game enables us to understand such network-wide incentives rather than bilateral ties alone.
We model the currency choice of users for trade settlement motivated by 
\cite{eichengreen2011renminbi} who observed that currency internationalization has often started by encouraging the use in trade invoicing and settlement. On the supply side, we assume that  
issuers can attract users by increasing liquidity at a cost, and they choose their provisions in sequence, i.e., there is a first mover, a second mover and so on. This setup is natural, for example,  when we would like to analyze how the USD dominance is challenged by other currencies.

Network spillover is at the core of our analysis. 
We show that the more integrated a trade network is, the fewer currencies are used internationally. This is because issuers respond by increasing liquidity, thus deterring the entry of new currencies. 
The central users---those who are well-connected to other users in a trade network---play the key role. 
We show that the more central a country is in a trade network, the more concentrated its currency allocation is. 
It follows that the issuer of a currency preferred by central users increases liquidity and achieves greater adoption. 

This feedback mechanism between the network spillover and the liquidity provision is the key insight we obtain as to why a dominant currency is hard to be replaced, but more importantly, how a challenger will level up the game and cause a sudden shift in dominant currencies when the trade network changes. 
Our main results can be generalized to a market with finitely many currencies.
To this end, we employ a dynamic programming approach, which exploits a recursive structure to obtain a unique solution for the currency choice problem of users. This means that we find a unique optimal allocation in a game that is inherently not recursive by this approach. 
\begin{figure}[ht!]
	\centering
	\begin{subfigure}{0.45\textwidth}
		\includegraphics[width=\textwidth]{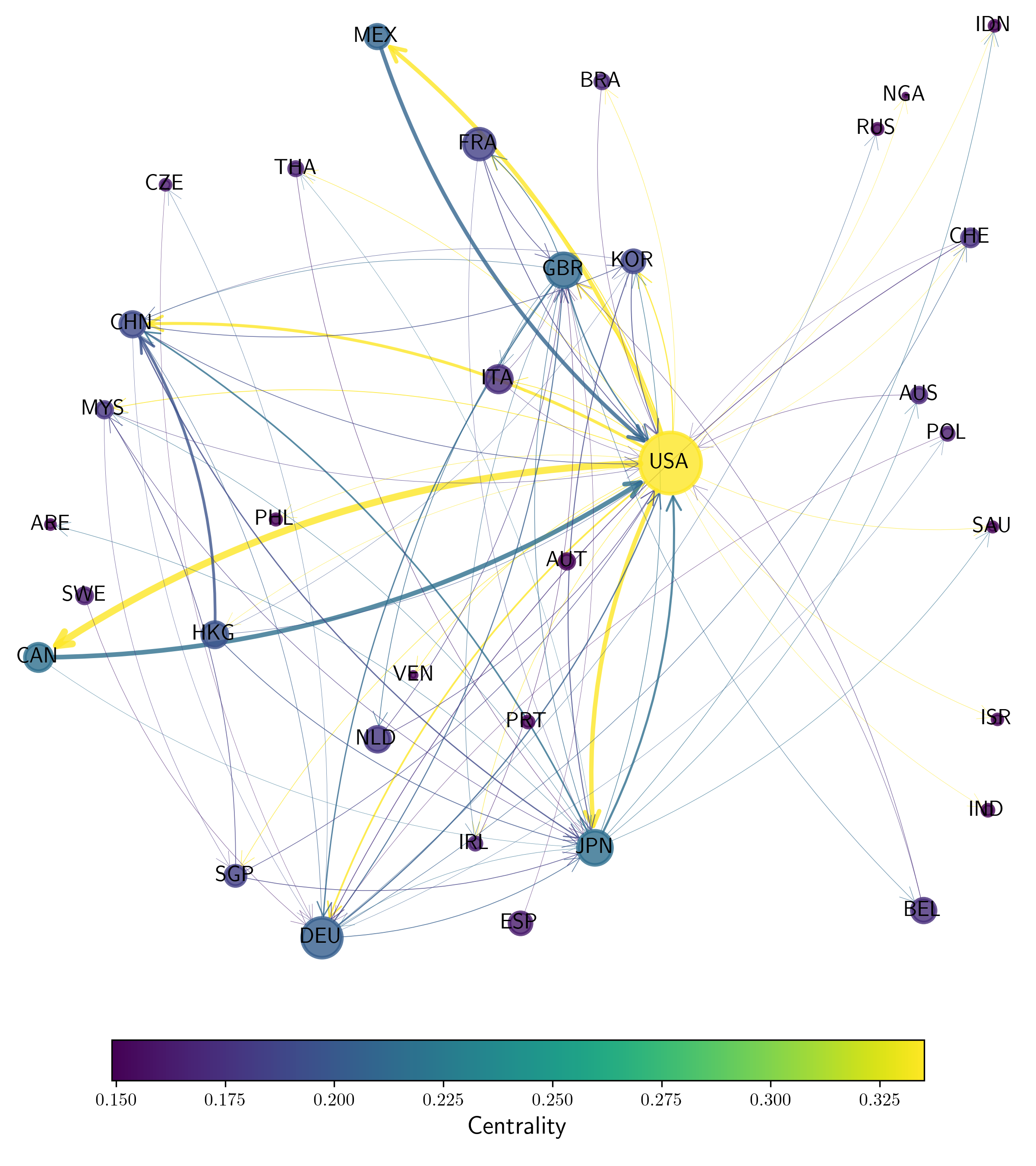}
		\caption{Trade payment network in 2000}
		\label{fig:trade2000}
	\end{subfigure}
	\qquad
	\begin{subfigure}{0.45\textwidth}
		\includegraphics[width=\textwidth]{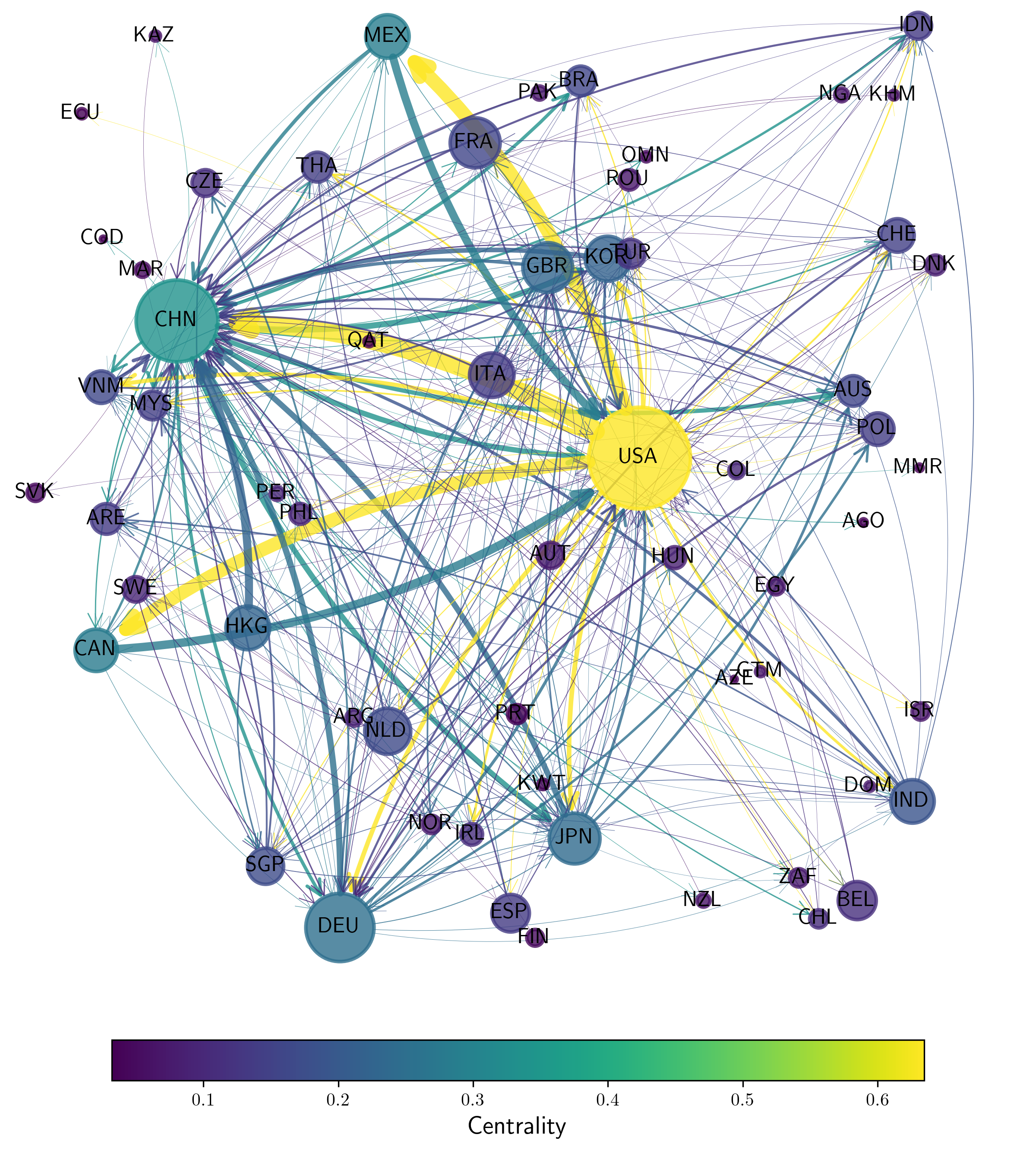}
		\caption{Trade payment network in 2022}
		\label{fig:trade2021}
	\end{subfigure}	
	\caption{Changes in the global trade network: 2000 and 2022}
	\label{fig:trade_nw}
	\parbox{\textwidth} {\footnotesize \textit{Note:} Data source: UN Comtrade. Trade among EU countries is excluded. Only bilateral trade with a value above 10 billion USD is plotted. Each node represents a country, with the node size proportional to total import volume. Directed links indicate payment flows, where a link from country $i$ to country $j$ represents payments made by $i$ (the importer) to $j$ (the exporter) for trade transactions. The thickness of links corresponds to payment volume (i.e., import value). The node colors reflect network centrality, with lighter colors indicating higher centrality. The trade volume is in trillion USD and Katz-Bonacich centrality is computed with $\lambda = 1$ for both years. The formula is introduced in Section \ref{subsec:users}.}
\end{figure}

The feedback mechanism explains why  China does not fully liberalize its capital account and why the RMB's share for trade settlement is so low despite its trade dominance. 
Figure \ref{fig:trade_nw} shows  global trade networks  and the centrality of countries---a measure of how well-connected a country is to other countries---for 2000 and 2021. It
reveals that  
despite its significant growth in bilateral trade, China's centrality increased at a much slower rate than its import volume, while the US retained the highest centrality: The US imports heavily from key importers who also import heavily from other key economies.\footnote{China's top importing partners include Australia, Brazil, Indonesia, Malaysia, Thailand, and Vietnam, whose import volumes are relatively low, and they do not import significantly from other key economies in the network.} 
China's lower centrality, according to our model, decreases its incentive to provide liquidity globally, since users are less responsive to its liquidity provision than to that of the US. In contrast, the high centrality of the US increases its incentive to provide liquidity globally  and amplifies the network spillover reinforcing USD dominance.

Our paper complements recent studies that view international currency competition through the lens of safe asset supply \citep[e.g.,][]{caballero2008equilibrium, maggiori2017financial, gopinath2018trade, gopinath2021banking, clayton2024international}. While these studies emphasize the store-of-value function of currencies and financial risks, our focus is on the medium-of-exchange role and how trade networks shape both currency usage and  internationalization strategies.
Existing theoretical work on settlement currencies often ignores or treats institutional factors such as financial development or capital account liberalization as exogenous \citep[e.g.,][]{rey2001international, chahrour2017international, liu2019trade}, whereas we endogenize  them as strategic commitments to liquidity provision. 
Moreover, network externalities are often captured without  an explicit network structure in the aggregate user-based models \citep[e.g.,][]{vaubel1984government, dowd1993currency} and the search models \citep[e.g.,][]{matsuyama1993toward, lagos2005unified, liu2010currency}. 

More recently, 
\cite{kikuchi2025game} develops a game-theoretic model of two superpowers who provide club goods to compete strategically for their sphere of influence. While the model provides insights into superpower competition and captures externalities among non-superpowers, it is not directly applicable to currency competition, as a currencies--unlike club goods--are generally non-excludable. Moreover, the model does not have an explicit network structure – a crucial factor for determining the currency choice in our model. 

Our model relates to multi-action network games developed by \citet{chen2018multiple} and \citet{chen2022impact}, in which agents engage in multiple activities that may be substitutes or complements.
Agents in their models do not solve a resource allocation problem, which is the key feature of our model. In this respect, the closest to our setup is \citet{belhaj2014competing}, who characterize equilibrium in a network game where each agent allocates fixed resources between two activities. Our setup is more general as we solve an allocation problem for finitely many currencies.

Empirical studies have captured the role of  network effects in determining the choices of currency by applying a logistic transformation to the currency share \citep[e.g.,][]{chinn2008euro, frankel2012internationalization, siranova2020determinants}, using the number of foreign quotations for the currency \citep[e.g.,][]{flandreau2009empirics}, or including lagged currency shares as predictors  \citep[e.g.,][]{bobba2007determinants, chictu2014did, lee2014will}. 
These approaches reflect the idea that countries are more likely to adopt a currency that is already widely used. Building on this, our model provides a testable theoretical prediction of how a country's centrality in the global trade network relates to the choice of settlement currencies. We also show that a country's commitment to currency internationalization---such as financial development and capital account liberalization---is shaped by its network position, alongside political partisanship, trade openness, and legal origin \citep[e.g.,][]{quinn1997origins, porta1998law, beck2003law, rajan2003great}.

The remainder of the paper is organized as follows. Section \ref{sec:model} introduces our model with two currencies. Section \ref{sec:eqm} explains the solution concepts. Section \ref{sec:results} presents the theoretical results. Section \ref{sec:dp} extends the baseline framework to a finite $T$ - currency setting. The proofs are relegated to Appendix.

\section{A two-currency problem}
\label{sec:model}

There are two currency issuers $\{a, b\}$. They issue currency $a$ and currency $b$, respectively.
Each issuer $p \in \{a, b\}$ sequentially decides whether to internationalize its currency and, if so, chooses a \textit{commitment level} $e_p \in [0, 1]$ for liquidity provision. This commitment level can be interpreted as the efforts to manage an open capital account or to maintain institutions, swap lines and infrastructures that expand the market for the currency, thereby enhancing its liquidity.
In this model, $a$ is the first mover and $b$ is the second mover.

There is  a finite set of users $N = \{1,2,3,…,n\}$ $(n \geq 2)$. Each user $i$ must settle $m_i$ units of international transactions with the available currencies. Throughout this paper, we assume that each country has a representative user who decides on the currencies to settle its imports. In our model, it matters little whether the users decide the allocation of currencies to settle their imports or exports (see Section \ref{subsec:users}).

We assume that settling imports with an international currency incurs costs that fluctuate depending on the exchange rates. These costs are mitigated by two factors: the liquidity of the currency and  the network externalities of using the same currency as trade partners. We assume that the liquidity of a currency is determined by the commitment level of the issuer. 
The objective of users is to minimize the cost of import settlement with the two international currencies.

We assume that trade relationships are given exogenously. Countries (importers) are naturally more influenced by the currency choices made by their dominant trade partners (exporters). To capture this, we represent the network as a weighted directed graph, where links represent the flows of payments, and therefore originate from importers to exporters.
Each link $ij$ from $i$ to $j$ with $(i, j) \in N^2$ has a weight $w_{ij} \geq 0$, indicating how much $i$ values $j$'s decision, or the dependency of $i$ on $j$. When $w_{ij} > 0$, $j$'s usage and $i$'s usage of the same currency are strategic complements from $i$'s perspective. For simplicity, we assume that $w_{ii}=0$ for all $i \in N$. The adjacency matrix for the network is $\bm{w} = [w_{ij}]$, which is a zero-diagonal non-negative square matrix.

The game unfolds in three stages:

\begin{enumerate}
	\item[(i)] The incumbent $a$ chooses whether to internationalize its currency and, if so, its commitment level $e_a \in [0, 1]$ for liquidity provision. 
	
	\item[(ii)] Given $a$'s choice, the challenger $b$ chooses whether to internationalize its currency and, if so, its commitment level $e_b \in [0, 1]$ for liquidity provision.  
	
	\item[(iii)] Given the choices of the two issuers, each user $i$ decides the usage of each currency, $x_{ia}$ and $x_{ib}$, to settle its imports such that $x_{ia}+x_{ib} = m_i$ for all $i \in N$.
\end{enumerate} 

The cost for user $i$ to settle $x_{ip}$ of its transactions with currency $p$ is equal to the import volume in currency $p$ multiplied by the unit cost of settling imports in currency $p$ 
given by  
\[
C_{ip} (\bm{x_p}, e_p) 
=  x_{ip} \bigg[
	\underbrace{ \vphantom{\sum_{j \neq i }}
	\epsilon}_{\text{exchange rate risk}}
	 - \underbrace{ \sum_{j \neq i } w_{ij} x_{jp}}_{\text{network externality}}
	 - \underbrace{ \vphantom{\sum_{j \neq i }} 
	 	f_{ip}(e_p)}_{\text{liquidity}}
 	\bigg],
\] 
where $\bm{x_p}$ denotes the vector of usage of  currency $p$ by all users and  the unit cost term, $[\epsilon_p - \sum_{j \neq i } w_{ij} x_{jp} - f_{ip}(e_p)]$, captures risks and fees arising from using currency $p$. The term $\epsilon$ is a random variable with variance $\sigma^2$, which represents the costs of acquiring and exchanging the currency that fluctuates with exchange rates of the currency. For simplicity, we assume that $\epsilon$ has a mean of zero and a common variance for the two currencies. The network externality term, $\sum_{j \neq i } w_{ij} x_{jp}$, reduces the unit cost when user $i$'s trade partners also use $p$. Finally, $f_{ip}(e_p)$ captures the liquidity of currency $p$, which reduces the unit cost as the currency may be used for other purposes such as asset holdings for store of value. We assume that $f_{ip}(e_p): [0,1] \rightarrow [0, \infty)$ is a differentiable, concave, and strictly increasing function of the commitment level $e_p$. 

We assume that users have {\em mean - variance preferences}, that is, they dislike the risk measured by the variance of the cost. Hence, the user $i$'s perceived cost of settling $x_{ip}$ transactions with currency $p$ becomes
\begin{align*}
	\cC_{ip} (\bm{x_p}, e_p) \coloneqq \frac{\eta}{2} \var[C_{ip}] + \EE[C_{ip}] 
	= \frac{\beta}{2} x_{ip}^2
	- \sum_{j \neq i } w_{ij} x_{ip} x_{jp}
	- f_{ip}(e_p) x_{ip},
\end{align*}
where $\eta >0$ represents the risk aversion, which is assumed to be the same for all users, and $\beta \coloneqq \eta \sigma^2$ captures the exchange rate risk perceived by users. As will be shown in Section \ref{subsec:users}, $\beta$ is assumed to be sufficiently large to ensure a unique and non-negative solution for the currency choice problem. 
The convex cost associated with the exchange rate risk gives users an incentive to diversify their currency use.\footnote{Although this functional form is restrictive, such quadratic functions can be seen as the second-order Taylor approximations of general convex functions that represent risk aversion, providing a balance between theoretical clarity and analytical tractability. This approach aligns with foundational work in financial economics \citep[e.g.,][]{markowitz1952portfolio} and systemic risks \citep[e.g.,][]{acemoglu2015networks}, while enabling closed-form solutions that connect equilibrium outcomes to network structure.}

Each user $i$ decides its usage of the two international currencies issued by $a$ and $b$ by solving 
\begin{equation}
	\label{value_both}
	\min_{0 \leq x_{ia}, x_{ib} \leq m_i} v_i (\bm{x_a}, \bm{x_b},  e_a,  e_b ) = \cC_{ia} +  \cC_{ib}  
	\quad \text{s.t. } x_{ia} + x_{ib} = m_i.
\end{equation}
Each issuer $p\in \{a,b\}$  solves the utility maximization problem given by
\begin{equation} \label{max_u}
	\max_{e_p} u_p(e_p, e_{-p}) = 
	\begin{cases}
		X_p(e_p, e_{-p}) - k c(e_p)   & \mbox{if  $0 \leq e_p \leq 1$} \\
		0 & \mbox{if $ e_p = \emptyset$,}
	\end{cases}
\end{equation}
where $X_p \coloneqq \sum_{i \in N} x_{ip}$ is the total usage of currency $p$, the cost for liquidity provision $c(e_p): [0,1] \rightarrow (0, \infty)$ is differentiable, convex, and strictly increasing in $e_p$, and $k\geq0$ captures the weight of the commitment cost. For simplicity, we assume that two issuers face the same cost function. Note that $c(0)>0$ because even with zero commitment, liquidity provision  requires the establishment and maintenance of  financial institutions and legal frameworks. Hence, we assume that it is costlier to internationalize a currency with $e_p = 0$ than not to internationalize it at all, i.e.,  $e_p = \emptyset$. 

\section{The equilibrium}
\label{sec:eqm}
We find the subgame perfect Nash equilibrium (SPE) by solving the problem backward. We begin this section with the currency choice problem of users given the commitment to liquidity provision by issuers.

\subsection{The users}
\label{subsec:users}

Each user minimizes its cost given the commitment to liquidity provision  by issuers and the currency choices made by  other users.
We assume that when there is only one issuer internationalizing its currency, users settle all their transactions in the only available international currency; and if both issuers give up on internationalizing their currencies, users cannot settle their imports using any international currency.\footnote{If neither issuer internationalizes their currency, trade must be settled in local currencies, a scenario not modeled in this paper as we focus on cases where at least one international currency is available.} We first derive the currency choices by users given $e_a$ and $e_b$ in the range [0,1], that is, when both issuers internationalize their currencies.

Since $x_{ib} = m_i - x_{ia}$ for all $i \in N$, the FOC  for (\ref{value_both}) with respect to $x_{ia}$ yields
%
\begin{align} \label{foc:both_a}
	x_{ia} ({\bm{x_a}, \bm{x_b}, e_a, e_b }) =  \frac{1}{2 \beta } \left[ \sum_{j \neq i } w_{ij} (x_{ja} -  x_{jb})+f_{ia}(e_a) -  f_{ib}(e_b) + \beta m_i
	\right].
\end{align}
Equation (\ref{foc:both_a}) shows that 1) given any pair $(e_a, e_b)$, the currency choice by user $i$ depends on that of other users, and 2) the usage of currency $a$ by each user is increasing in the liquidity of currency $a$, $f_{ia}(e_a)$, and decreasing in that of currency $b$, $f_{ib}(e_b)$.
Similarly, the FOC with respect to $x_{ib}$ yields
\begin{align} \label{foc:both_b}
	x_{ib} ({\bm{x_a}, \bm{x_b}, e_a, e_b }) =  \frac{1}{2 \beta } \left[ \sum_{j \neq i } w_{ij} ( x_{jb}-x_{ja}) + f_{ib}(e_b)- f_{ia}(e_a) + \beta m_i
	\right].
\end{align}
Let $d_i \coloneqq x_{ia} - x_{ib}$ be the \textit{usage difference} chosen by user $i$ in the equilibrium given $a$ and $b$'s commitment levels.
From (\ref{foc:both_a}) and  (\ref{foc:both_b}), for all $i \in N$
\begin{eqnarray}  
d_i&=& \frac{1}{\beta} \left[\sum_{j \neq i } w_{ij}(x_{ja} - x_{jb}) + f_{ia}(e_a) - f_{ib}(e_b)  \right]  \label{subs}\\
&=&\frac{1}{\beta} \sum_{j \neq i } w_{ij}d_j+\gamma_i,  \label{subs_short}
\end{eqnarray}
where $\gamma_{i}(e_a, e_b) \coloneqq \frac{ f_{ia}(e_a) - f_{ib}(e_b)}{\beta}$.
We refer to $\gamma_i(e_a, e_b)$ as the \textit{difference in marginal costs} that user $i$ faces when choosing between the two currencies. Note that the numerator of this term is equal to the difference in marginal costs between using currency $b$ versus using currency $a$, excluding the network effects. The denominator normalizes this difference by the user's sensitivity to the exchange rate risks. Thus, $\gamma_i$ quantifies the non-network-driven marginal preference for currency $a$ over currency $b$.
Given the vector $\bm{\gamma}$ with entries $\gamma_i$, the vector $\bm{d}$ with $d_i = x_{ia} - x_{ib}$ that satisfies (\ref{subs_short}) can be written as $\bm{d} =  \frac{1}{\beta} \bm{w} \bm{d} + \bm{\gamma}$. Solving it for $\bm{d}$ yields
\begin{align} \label{foc:d}
	\bm{d} & = (\mathbb{I} - \frac{1}{\beta} \bm{w})^{-1} \bm{\gamma}	
	 = \sum_{\ell=0}^\infty \frac{1}{\beta^\ell} \bm{w}^\ell \bm{\gamma},
\end{align}
where $\mathbb{I}$ is the identity matrix.
Let $r(\bm{w})$ be the spectral radius of $\bm{w}$, i.e., the absolute value of the largest eigenvalue of $\bm{w}$. In order for vector $\bm{d}$  to be unique and finite, we need $\beta > r(\bm{w})$. 
For any $i \in N$, we have $x_{ia} + x_{ib} = m_i$ and $x_{ia} - x_{ib} = d_i$. Therefore, the optimal usage level of each currency by user $i$ is
\begin{equation} \label{foc:usage}
	x_{ia} = \frac{m_i + d_i}{2} \quad\text{and}\quad x_{ib} = \frac{m_i-d_i}{2}.
\end{equation}
Define $-p \coloneqq b$ if $p=a$, and $-p \coloneqq a$ if $p=b$. From (\ref{subs}), $- m_i \leq d_i \leq m_i $ for all $i$ and for any pair $(e_a, e_b)$ when $\beta \geq \max_{i \in N, p \in \{a,b\}} \frac{\sum_{j \neq i} w_{ij} m_j + f_{ip}(1) - f_{i,-p}(0)}{m_i}$. Hence, when $\beta$ is sufficiently large, the resulting usage levels of the two currencies are non-negative.
Let $\underline{\beta}$ be the minimum values of $\beta$ that ensure a unique and nonnegative solution for the  currency choice problem. Throughout this paper, we assume that $\beta\geq \underline{\beta}$.
So far, given any pair $(e_a, e_b)$, we have derived the unique solution for the currency choice problem.

It is important to note that (\ref{foc:d}) implies that a user's usage level of each currency is closely related to its Katz-Bonacich centrality in the network \citep{katz1953new, bonacich1987power}. 
Given a parameter $\lambda$ that satisfies $0<\lambda<\frac{1}{r(\bm{w})}$, the vector of Katz-Bonacich centralities of the nodes in network $\bm{w}$ is finite and uniquely defined by
\begin{align} 
	 \bm{\kappa} & \coloneqq (\mathbb{I} - \lambda \bm{w})^{-1} \bm{1}	 = \sum_{\ell=0}^\infty \lambda^\ell \bm{w}^\ell \bm{1}			 \label{katz},
\end{align}
where $\bm{1}$ is an $n \times 1$ vector of ones. Here, each entry $A_{ij}$ of the square matrix $\bm{A} \coloneqq \sum_{\ell=0}^{\infty} \lambda^{\ell} \bm{w}^{\ell}$ counts all the paths in $\bm{w}$ that start at $i$ and end at $j$, with paths of length $\ell$ being weighted by $\lambda^\ell$. Therefore, the Katz-Bonacich centrality of user $i$, $\kappa_i$, measures the total number of paths originating from $i$, with each path weighted by a decay factor that diminishes as the path length increases.
A user has a high Katz-Bonacich centrality if 1) many links leave the user, 2) these links have large weights, and 3) these links point to users with high centralities.
From (\ref{foc:d}) and (\ref{katz}), we have the following result.
\begin{lemma}
	\label{lem:d_kappa}
	Define the Kat-Bonacich centrality $ \bm{\kappa}  \coloneqq (\mathbb{I} - \frac{1}{\beta} \bm{w})^{-1} \bm{1}	$. When $\gamma_i(e_a, e_b) = \gamma$ for all $i \in N$, we have $\bm{d}(e_a, e_b) = \gamma \bm{\kappa} $. When $\gamma_i(e_a, e_b)$ differs across users, by defining an adjusted Katz-Bonacich centrality $\bm{\tilde{\kappa}} \coloneqq (\mathbb{I} - \frac{1}{\beta} \bm{w})^{-1} \bm{\gamma}$, we have $\bm{d}(e_a, e_b) = \bm{\tilde{\kappa}}$.
\end{lemma}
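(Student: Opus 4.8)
The plan is to obtain both claims as immediate consequences of the closed-form solution (\ref{foc:d}), which already expresses the usage-difference vector as $\bm{d} = (\mathbb{I} - \frac{1}{\beta}\bm{w})^{-1}\bm{\gamma}$. The only preliminary I would verify is that the inverse is well-defined: since the standing assumption is $\beta \geq \underline{\beta}$ and in particular $\beta > r(\bm{w})$ (as required right after (\ref{foc:d}) for $\bm{d}$ to be unique and finite), the decay parameter $\lambda = 1/\beta$ falls in the admissible range $0 < \lambda < 1/r(\bm{w})$ used in the definition (\ref{katz}). Hence $\mathbb{I} - \frac{1}{\beta}\bm{w}$ is invertible, the Neumann series $\sum_{\ell=0}^{\infty}\beta^{-\ell}\bm{w}^{\ell}$ converges, and both $\bm{\kappa}$ and $\bm{\tilde{\kappa}}$ are finite and uniquely defined.

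For the homogeneous case ($\gamma_i = \gamma$ for all $i$), I would substitute $\bm{\gamma} = \gamma\bm{1}$ into (\ref{foc:d}) and factor the scalar $\gamma$ through the matrix inverse, since scalar multiplication commutes with $(\mathbb{I} - \frac{1}{\beta}\bm{w})^{-1}$. This gives $\bm{d} = \gamma(\mathbb{I} - \frac{1}{\beta}\bm{w})^{-1}\bm{1} = \gamma\bm{\kappa}$. For the heterogeneous case, the vector $\bm{\tilde{\kappa}}$ is defined to be exactly the right-hand side of (\ref{foc:d}), so the identity $\bm{d} = \bm{\tilde{\kappa}}$ holds by construction, with no computation required.

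I do not expect a genuine obstacle in this argument: the statement is essentially a reinterpretation of the already-derived formula (\ref{foc:d}) through the lens of centrality. The single point deserving care is ensuring that the decay factor $\lambda = 1/\beta$ is admissible and that $\mathbb{I} - \frac{1}{\beta}\bm{w}$ is invertible, both of which are guaranteed by $\beta > r(\bm{w})$. The real content is conceptual rather than technical: identifying the equilibrium usage difference with the Katz-Bonacich centrality when the marginal cost differences are uniform across users (so the benefit vector is $\bm{1}$), and with a $\bm{\gamma}$-weighted generalization of that centrality when they are heterogeneous.
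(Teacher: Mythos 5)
Your proposal is correct and matches the paper's own argument, which simply notes that the lemma follows directly from \eqref{foc:d} and \eqref{katz}. Your added care about the invertibility of $\mathbb{I} - \frac{1}{\beta}\bm{w}$ under $\beta > r(\bm{w})$ and the factoring of the scalar $\gamma$ through the inverse just makes explicit what the paper leaves implicit.
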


Lemma \ref{lem:d_kappa} states that the difference in the usage of the two currencies by each user is proportional to its (adjusted) Katz-Bonacich centrality. When $\gamma_i(e_a, e_b) > 0 ~\forall i$, a user with a higher centrality uses more of currency $a$ and less of currency $b$. In contrast, when $\gamma_i(e_a, e_b) < 0 ~\forall i$, vise versa. Intuitively, when a user becomes more central in the network, its currency use becomes more uneven. 
This is because a more central user benefits more from matching the currency with its trade partners. Moreover, a central user's currency choice influences others (directly and indirectly), which in turn raises the central user’s benefit of using the currency. Thus, a user with a higher centrality has the incentive to use the currency that it prefers more. 

Since the Katz-Bonacich centrality is calculated based on paths originating from each node, the direction of links, that is, how users take others' behavior into account, is important in determining the currency choice. We believe that in a currency choice problem, the directed links should represent the flows of payments and therefore, originate from the importers to the exporters.
This is because an importer prefers to pay the exporter in a currency commonly used by both of them to minimize exchange costs. Similarly, an exporter prefers to receive payments in a currency it can use to pay its own exporters, reducing future conversion needs. In both cases, users make currency decisions based on their payment destinations. Thus, the direction of the links reflects how much weight users place on each other’s currency choices, with the flows of payments naturally representing these interdependent decisions.
Therefore, no matter whether we let the importers or the exporters decide the transaction settlement, their decisions are still proportional to the (adjusted) Katz-Bonacich centralities in terms of the payment flows. Hence, it matters little whether the users decide the currencies to settle their imports or exports in our model. For simplicity, we assume that the users decide the currencies to settle their imports and thus $m_i$ is user $i$'s total import.\footnote{If we assume that the users decide the currencies to settle their exports, the only change in the solution for the currency choice problem is that $m_i$ becomes user $i$'s total export.}

\subsection{The issuers}

Given the responses of users, the payoffs to the two issuers for their commitment levels $(e_a, e_b)$ are well-defined. We make the following two assumptions on the choices of issuers.
\begin{enumerate}[label=(\roman*)]
	\item \label{as:zero} If $u_p \leq 0$, then $p$ does not internationalize its currency. If $u_p > 0$, then $p$ internationalizes its currency with a commitment  level $e_p \in [0,1]$ for liquidity provision.
	\item \label{as:open} Consider two distinct choices $e_p$ and $e'_p$. If $u_p (e_p) = u_p(e'_p)$, then $p$ chooses $\max \{e_p, e'_p \}$.
\end{enumerate}	
Assumption \ref{as:zero} states that an issuer does not internationalize its currency unless it is beneficial for itself. This assumption is essential for the existence of the Nash equilibria.
Assumption \ref{as:open} states that the issuers choose a higher commitment level for liquidity provision  if multiple choices yield the same payoff. This assumption allows us to investigate the best that issuers would do to internationalize their currencies.
Although the players have infinitely many actions to consider, we can show that the best responses exist at every subgame, and thus the SPE exists.
\begin{lemma} \label{lemma:SPE}
	Suppose that assumption \ref{as:zero} holds. Then, the game has at least one subgame-perfect Nash equilibrium.
\end{lemma}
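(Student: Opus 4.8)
The plan is to establish existence by backward induction through the three stages, showing that a best response exists at every subgame; because the action sets are continua, the real work is to replace the finite-game induction with topological existence arguments, and the single genuine difficulty is a discontinuity in the first mover's continuation payoff. I would set up the induction so that the users' optimum feeds into the challenger's problem, which in turn feeds into the incumbent's problem, and argue existence of a maximizer at each step.

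For the users' stage, the derivation leading to (\ref{foc:usage}) already produces, for every pair $(e_a,e_b)\in[0,1]^2$, a unique profile that depends continuously on $(e_a,e_b)$ through $f_{\cdot a}(e_a)$ and $f_{\cdot b}(e_b)$. Summing (\ref{foc:usage}) and using that the Leontief inverse $(\mathbb{I}-\frac1\beta\bm{w})^{-1}=\sum_{\ell\ge0}\beta^{-\ell}\bm{w}^\ell$ is entrywise nonnegative, I would record the key structural fact that aggregate usage $X_a(e_a,e_b)$ is \emph{additively separable} in the two commitment levels: it is increasing in $e_a$, decreasing in $e_b$, and satisfies $X_a\le M\coloneqq\sum_i m_i$ with equality only when currency $b$ is entirely unused. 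The degenerate subgames, in which one or both issuers do not internationalize, I would dispose of directly from the stated conventions (all usage falls on the single available currency, or none).

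For the challenger's stage I would note that $u_b(e_b,e_a)=\big(M-X_a(e_a,e_b)\big)-kc(e_b)$ is continuous on the compact set $[0,1]$ and, by concavity of each $f_{jb}$ together with convexity of $c$, concave in $e_b$, so a maximizer exists by Weierstrass. The decisive observation is that separability forces $e_a$ to enter $u_b$ only through a term that does not involve $e_b$; hence the maximizing set in $e_b$ does not move with $e_a$, and Assumption \ref{as:open} selects a single level $e_b^{\circ}$, independent of $e_a$, whenever $b$ internationalizes. Since $u_b(e_b^{\circ},e_a)$ is decreasing in $e_a$, Assumption \ref{as:zero} turns $b$'s decision into a threshold rule: internationalize at $e_b^{\circ}$ below a cutoff $\bar e_a$ and stay out at or above it. This yields a well-defined best response $e_b^{*}(\cdot)$.

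For the incumbent's stage, the continuation payoff $\Phi(e_a)\coloneqq u_a\big(e_a,e_b^{*}(e_a)\big)$ equals $X_a(e_a,e_b^{\circ})-kc(e_a)$ for $e_a<\bar e_a$ and $M-kc(e_a)$ for $e_a\ge\bar e_a$. The hard part is that $\Phi$ is discontinuous at $\bar e_a$, where $b$'s exit makes usage of $a$ jump up to $M$; this is the main obstacle, since a discontinuous payoff need not attain its supremum. I would resolve it by showing the jump is \emph{upward with the value taken on top}: $X_a(e_a,e_b^{\circ})<M$ strictly to the left of $\bar e_a$ (because $b$ has positive usage there), while Assumption \ref{as:zero} makes $b$ exit already at $\bar e_a$ itself (where $u_b\le0$), so $\Phi(\bar e_a)=M-kc(\bar e_a)$ coincides with the higher right-hand piece. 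Hence $\Phi$ is upper semicontinuous on the compact interval $[0,1]$, and an upper semicontinuous function on a compact set attains its maximum. Comparing that maximum with the payoff $0$ from not internationalizing (again Assumption \ref{as:zero}) determines $a$'s optimal action, and having exhibited a best response at every subgame, the resulting profile is a subgame-perfect equilibrium.
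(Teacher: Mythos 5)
Your proposal is correct and follows essentially the same route as the paper's proof: backward induction with the users' unique interior solution, existence of the challenger's best response via concavity and Weierstrass, a threshold (cutoff) characterization of the challenger's entry decision, and upper semicontinuity of the incumbent's discontinuous continuation payoff (the jump at the cutoff is upward, with the value on the upper piece) so that a maximizer exists and can be compared with the payoff $0$ from staying out. Your explicit use of additive separability of $X_a(e_a,e_b)$ to show that the challenger's optimal interior commitment is independent of $e_a$ is the same fact the paper obtains implicitly from its first-order condition \eqref{BR_b}, so the two arguments coincide in substance.
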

Therefore, by backward induction, we obtain the SPE. By assumption \ref{as:open}, ties are broken and we can choose a unique SPE where issuers do their best to internationalize their currencies.

Let $M \coloneqq \sum_{i \in N} m_i$ be the total demand for international currencies.
It is worth noting that when $k > \frac{M}{c(0)}$, the cost of the lowest commitment level is higher than the benefit from obtaining the whole market, thus no issuer has an incentive to internationalize its currency. Therefore, we focus on the range of $k$ such that $0 \leq k \leq \frac{M}{c(0)}$ to analyze the SPE. In general, under any game setting, there is a threshold $\underline{k}$ that is crucial for determining the distribution of currency usage in the equilibrium outcome. Denote $e_p^*$ as the choice of issuer $p$ in the SPE outcome. We have the following result. 

\begin{lemma} \label{lemma:k}
	Given any $\bm{w}$, $\bm{m}$ and $\beta$, there exists  $\underline{k} \geq 0$ such that
	\begin{itemize}
		\item[(i)] If $0\leq k < \underline{k}$, then both issuers internationalize their currencies, i.e., $e_a^* \geq 0$ and $e_b^* \geq 0$.
		\item[(ii)] If $\underline{k} \leq k \leq \frac{M}{c(0)}$ , then one issuer attains its monopoly, i.e., $\exists p \in \{a,b\}$ such that $e_p^* \geq 0$ and $e_{-p}^* = \emptyset$.
	\end{itemize}
\end{lemma}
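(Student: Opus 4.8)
The plan is to solve the three-stage game by backward induction, reduce each issuer's payoff to a scalar function of $(e_a,e_b)$, and then show that the dichotomy ``both internationalize versus one monopolizes'' is governed by a single crossing in $k$. First I would record what Lemma \ref{lem:d_kappa} implies for total usage. Summing the usage difference over users, $X_a-X_b=\bm 1^\top\bm d=\bm 1^\top(\mathbb{I}-\tfrac1\beta\bm w)^{-1}\bm\gamma$. Since $\gamma_i=\tfrac1\beta\big(f_{ia}(e_a)-f_{ib}(e_b)\big)$ splits additively into an $e_a$-part and an $e_b$-part, writing $\bm s^\top\coloneqq\bm 1^\top(\mathbb{I}-\tfrac1\beta\bm w)^{-1}\ge\bm 0$ and $F_p(e)\coloneqq\tfrac1\beta\sum_i s_i f_{ip}(e)$ gives the separable closed forms
\[
X_a(e_a,e_b)=\tfrac M2+\tfrac12\big(F_a(e_a)-F_b(e_b)\big),\qquad X_b(e_a,e_b)=M-X_a(e_a,e_b),
\]
where each $F_p$ is nonnegative, increasing and concave; when only $p$ internationalizes, $X_p=M$.

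Next I would solve the second mover. Given that $a$ has chosen $e_a$, define $\phi_b(k)\coloneqq\max_{e_b\in[0,1]}\big[\tfrac12 F_b(e_b)-k\,c(e_b)\big]$, whose maximizer $e_b^\ast(k)$ is independent of $e_a$ by separability. Then $b$'s best entry payoff is $u_b^\ast(e_a,k)=\tfrac M2-\tfrac12 F_a(e_a)+\phi_b(k)$, which is continuous, strictly decreasing in $e_a$, and non-increasing in $k$ (a maximum of maps each non-increasing in $k$, as $c>0$). By Assumption \ref{as:zero}, $b$ internationalizes iff $u_b^\ast>0$, so there is an entry-deterrence threshold $\bar e_a(k)$ solving $\tfrac12 F_a(\bar e_a)=\tfrac M2+\phi_b(k)$, and $b$ enters iff $e_a<\bar e_a(k)$. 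Because $\phi_b$ is non-increasing, $\bar e_a(k)$ is non-increasing, so deterrence becomes easier as $k$ grows; moreover, since $\beta\ge\underline\beta$ forces $\tfrac12\big(F_a(1)-F_b(1)\big)<\tfrac M2$, one checks $\bar e_a(k)>1$ for small $k$ (deterrence infeasible), while $u_b^\ast(0,k)\le0$ for $k$ near $M/c(0)$ (entry infeasible even against $e_a=0$).

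Then I would treat the first mover. Anticipating $b$, issuer $a$ compares three options: \emph{accommodate}, with value $A(k)\coloneqq\tfrac M2-\tfrac12 F_b(e_b^\ast(k))+\sup_{e_a<\bar e_a(k)}\big[\tfrac12 F_a(e_a)-k c(e_a)\big]$; \emph{deter}, with value $D(k)\coloneqq M-k\,c\big(\max\{\bar e_a(k),0\}\big)$ when $\bar e_a(k)\le1$ (the cheapest deterring commitment, admissible by Assumption \ref{as:open}); and \emph{exit} ($u_a=0$, handing $b$ a monopoly). Both currencies are internationalized precisely when $A(k)>\max\{D(k),0\}$ with the accommodation optimum lying strictly below $\bar e_a(k)$. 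I would set $\underline k\coloneqq\inf\{k\in[0,M/c(0)]:A(k)\le\max\{D(k),0\}\}$ and aim to prove that the duopoly region is exactly $[0,\underline k)$, with the complement giving case (ii).

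The crux, and the step I expect to be the main obstacle, is showing that this region is a genuine lower interval, i.e.\ that $A(k)$ versus $\max\{D(k),0\}$ single-crosses. This is delicate because raising $k$ simultaneously lowers $a$'s own optimal commitment, lowers $e_b^\ast(k)$ (which raises $X_a$ and helps accommodation), and lowers $\bar e_a(k)$ (which cheapens deterrence); furthermore $\Pi_a(\cdot,k)$ jumps upward at $e_a=\bar e_a(k)$ because the monopoly share $M$ exceeds the duopoly share, so naive envelope derivatives of $A-D$ are sign-ambiguous. The plan is to exploit two structural facts: monotone comparative statics, namely that $e_b^\ast(k)$, $\bar e_a(k)$ and the accommodation maximizer are all non-increasing in $k$ since the objectives are submodular in $(e_\cdot,k)$ (the cross term is $-c'(\cdot)<0$); and convexity in $k$ of the value maps $\phi_b$ and $\psi_a(k)\coloneqq\sup_{e_a}[\tfrac12 F_a(e_a)-kc(e_a)]$ as upper envelopes of affine-in-$k$ families. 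Combining these forces the set on which $a$ prefers monopoly to be upward closed, which together with the two endpoint computations locates a unique $\underline k\in[0,M/c(0)]$ and yields (i) and (ii); a final bookkeeping step handles the boundary $k=M/c(0)$, where the monopolist's payoff $M-kc(0)$ collapses to zero and Assumption \ref{as:zero} applies.
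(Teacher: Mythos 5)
Your architecture is the same as the paper's: backward induction, the separable closed form $X_a=\tfrac M2+\tfrac12(F_a(e_a)-F_b(e_b))$, a second-mover optimum $e_b^*(k)$ independent of $e_a$, a deterrence threshold $\bar e_a(k)$ (the paper's $\hat e_a(k)$), the accommodate/deter/exit trichotomy for $a$, and the same two endpoint computations. The gap sits exactly at the step you yourself flag as the crux: the claim that submodularity (monotone maximizers) plus convexity of the envelopes $\phi_b,\psi_a$ ``forces the set on which $a$ prefers monopoly to be upward closed'' is a non sequitur, and without it $\underline k\coloneqq\inf\{k: A(k)\le\max\{D(k),0\}\}$ does not deliver the dichotomy. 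Concretely: (1) the deterrence value $D(k)=M-k\,c(\bar e_a(k))$ is \emph{not} the value function of any optimization by $a$ --- $\bar e_a(k)$ is pinned down by $b$'s zero-profit condition --- so the ``upper envelope of affine-in-$k$ maps'' argument says nothing about $D$; the term $k\,c(\bar e_a(k))$ is a product of an increasing and a non-increasing factor, with undetermined monotonicity and curvature, and even if $A$ and $D$ were both convex, a difference of convex functions need not single-cross. (2) Upward-closedness of $\{k:A(k)\le 0\}$ requires $A$ non-increasing, which your tools also do not give: by the envelope theorem $A'(k)=-\tfrac12 F_b'(e_b^*(k))\,\frac{d e_b^*(k)}{dk}-c(e_a^{\mathrm{acc}}(k))$, and the first term is \emph{positive} (the rival's commitment falls with $k$, raising $a$'s share), so the sign is ambiguous without further argument. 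A priori, then, your monopoly set could be a union of two disjoint intervals with a duopoly interval in between, which would contradict part (ii) for $k$ above your $\underline k$.

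The paper closes precisely this hole with a shape argument your plan never engages. It differentiates the payoff gap $u_a(\underline e_a,\emptyset;k)-u_a(\tilde e_a,\tilde e_b;k)$ and shows its sign is governed by the comparison of the \emph{deterring} commitment $\underline e_a(k)$ with the \emph{accommodating} commitment $\tilde e_a(k)$: the gap is decreasing in $k$ while $\underline e_a(k)>\tilde e_a(k)$ and increasing once $\underline e_a(k)<\tilde e_a(k)$ (i.e., for $k>\check k$), hence has at most two roots; and --- the key interleaving of preference with feasibility --- the feasibility threshold $\hat k$ (where $\hat e_a(\hat k)=1$) lies above the lower root, so on the feasible region the gap crosses zero at most once and from below. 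That is what makes ``deterrence feasible \emph{and} preferred'' upward closed, and lets $\underline k=\min\{k',\dot k\}$ cleanly separate the two regimes. Note that some such interleaving is unavoidable: near $k=0$ deterrence is \emph{preferred} in the unconstrained comparison (capturing $M$ at negligible cost beats sharing the market) but infeasible, so the preference set alone is genuinely not upward closed. To repair your proof you would need to establish this quasi-convexity of $D-A$ keyed to $\bar e_a(k)$ versus the accommodation maximizer (or some substitute single-crossing fact), not merely convexity and monotonicity of the value envelopes.
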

Lemma \ref{lemma:k} states that given any game setting, when $k$ is sufficiently low, both issuers choose to internationalize their currencies; while when $k$ is sufficiently high, there is only one international currency issuer in the market.

\section{Theoretical results}
\label{sec:results}
\subsection{The incumbent advantage}
\label{symmetric_case}

To highlight the incumbent advantage, we assume that the players are ``symmetric" in this section and study the equilibrium outcome.

\begin{definition}
The users are symmetric if they benefit equally from liquidity of two currencies, i.e., $f_{ia}(e) = f_{ib}(e) = f(e) ~\forall i \in N$.
\end{definition}

Let  $e_p^*$ be the choice of issuer $p$ and $X_p^*$ be the total usage of currency $p$ in the SPE outcome. 
We obtain the following result.
\begin{proposition} \label{prop:1stmover}
 Suppose that the users are symmetric. Given any $\bm{w}$ and $\bm{m}$, there exist $\underline{k} $ and $\bar{k}$ such that $0<\underline{k} < \bar{k} < \frac{M}{c(0)}$ and
	\begin{itemize}
		\item[(i)] If $\bar{k} \leq k \leq \frac{M}{c(0)}$, then  $a$ attains its monopoly with zero commitment level, i.e., $ e_a^* = 0 , e_b^* = \emptyset, X_a^* = M$ and $X_b^* = 0$.
		\item[(ii)] If $\underline{k} \leq k < \bar{k} $,  then $a$ attains its monopoly with a positive commitment level, i.e., $0 < e_a^* \leq 1 , e_b^* = \emptyset, X_a^* = M$ and $X_b^* = 0$.
		\item[(iii)]If $0 \leq k < \underline{k}$, then $a$ and $b$ choose the same commitment level and share the market equally, i.e., $0 \leq e_a^* = e_b^* \leq 1$ and $X_a^* = X_b^* = \frac{M}{2}$.
	\end{itemize} 
Moreover, $\underline{k} $ is increasing in $\beta$ and $M$.
\end{proposition}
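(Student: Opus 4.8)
The plan is to solve backward and collapse each issuer's problem to a one-dimensional optimization. Under symmetry the difference in marginal costs $\gamma_i=(f(e_a)-f(e_b))/\beta$ is the same for every user, so Lemma~\ref{lem:d_kappa} gives $\bm d=\gamma\bm\kappa$; writing $K\coloneqq\sum_{i\in N}\kappa_i>0$ and summing $x_{ia}=(m_i+d_i)/2$, the aggregate usages when both issuers internationalize are $X_a=\tfrac M2+\tfrac{K}{2\beta}\bigl(f(e_a)-f(e_b)\bigr)$ and $X_b=M-X_a$. Substituting into (\ref{max_u}) makes the payoffs additively separable in the two commitments, so each issuer's own first-order condition reads $\tfrac{K}{2\beta}f'(e)=kc'(e)$, which is \emph{independent of the rival}. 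Because $\tfrac{K}{2\beta}f-kc$ is strictly concave ($f$ concave, $c$ convex), it has a unique maximizer $\hat e(k)\in[0,1]$. I would first record that when both internationalize each plays $\hat e(k)$, whence $f(e_a)-f(e_b)=0$ and $X_a=X_b=M/2$, which delivers the symmetric split in part~(iii).

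Next I would study entry deterrence. Since $u_b(e_a,\hat e)=\tfrac M2-\tfrac{K}{2\beta}\bigl(f(e_a)-f(\hat e)\bigr)-kc(\hat e)$ is strictly decreasing in $e_a$, there is a unique level $e_a^{\mathrm{det}}(k)$ with $u_b(e_a^{\mathrm{det}},\hat e)=0$, and $b$ stays out exactly when $e_a\ge e_a^{\mathrm{det}}$. Under monopoly $X_a=M$, so $a$'s payoff $M-kc(e_a)$ falls in $e_a$ and the incumbent, if it excludes $b$, does so at the cheapest deterring level $\max\{e_a^{\mathrm{det}},0\}$. Defining $\bar k$ by $e_a^{\mathrm{det}}(\bar k)=0$ yields part~(i): for $k\ge\bar k$ the incumbent monopolizes at zero commitment. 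For $k<\bar k$ I would compare the deterrence payoff $M-kc(e_a^{\mathrm{det}})$ (available only when $e_a^{\mathrm{det}}\le1$) with the accommodation payoff $\tfrac M2-kc(\hat e)$; Lemma~\ref{lemma:k} already supplies a single threshold $\underline k$ separating duopoly ($0\le k<\underline k$) from monopoly ($k\ge\underline k$), which gives the regimes of parts~(ii) and~(iii) with $e_a^*=e_a^{\mathrm{det}}\in(0,1]$. To obtain $\underline k>0$ I would note that at $k=0$ commitment is free, so $b$ best-responds with $e_b=1$ and $u_b(e_a,1)=\tfrac M2+\tfrac{K}{2\beta}\bigl(f(1)-f(e_a)\bigr)\ge\tfrac M2>0$ for all $e_a\le1$; deterrence is then impossible and the duopoly prevails. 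The ordering $0<\underline k<\bar k<M/c(0)$ follows because at $\bar k$ deterrence is costless and strictly beats accommodation (as $c(0)\le c(\hat e)$), and the monopoly payoff stays positive there.

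The monotonicity of $\underline k$ is the crux. I would characterize $\underline k$ at the duopoly--monopoly boundary by whichever binds first: the indifference condition $\Delta(k)\coloneqq\bigl(M-kc(e_a^{\mathrm{det}})\bigr)-\bigl(\tfrac M2-kc(\hat e)\bigr)=0$, or the feasibility condition $e_a^{\mathrm{det}}(k)=1$, the two unknowns being linked by the deterrence identity $f(e_a^{\mathrm{det}})-f(\hat e)=\tfrac{\beta}{K}\bigl[M-2kc(\hat e)\bigr]$. Rather than sign $\partial\Delta/\partial k$, I would use that Lemma~\ref{lemma:k} makes $\Delta$ cross zero only once (upward) in $k$, so it suffices to show $\Delta$ shifts \emph{down} as each parameter rises, which pushes the crossing point right. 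Since $\hat e(k)$ does not depend on $M$, differentiating the deterrence identity gives $f'(e_a^{\mathrm{det}})\,\partial e_a^{\mathrm{det}}/\partial M=\tfrac1{2\theta}$ with $\theta\coloneqq K/(2\beta)$, hence $\partial\Delta/\partial M=\tfrac12\bigl(1-R\bigr)$, where $R\coloneqq\dfrac{f'(\hat e)\,c'(e_a^{\mathrm{det}})}{c'(\hat e)\,f'(e_a^{\mathrm{det}})}$; because $e_a^{\mathrm{det}}>\hat e$, concavity of $f$ and convexity of $c$ force $R>1$, so $\partial\Delta/\partial M<0$ and $\underline k$ increases in $M$. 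For $\beta$, the rival first-order condition $\theta f'(\hat e)=kc'(\hat e)$ makes the indirect effect through $\hat e$ cancel, leaving $f'(e_a^{\mathrm{det}})\,\partial e_a^{\mathrm{det}}/\partial\theta=-\tfrac1{2\theta^2}\bigl[M-2kc(\hat e)\bigr]<0$; since $K=\sum_i\kappa_i$ falls in $\beta$ while $\beta$ grows, $\theta$ falls with $\beta$, the required $e_a^{\mathrm{det}}$ rises, and $\Delta$ again shifts down, so $\underline k$ increases in $\beta$.

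I expect the genuine difficulty to be the boundary bookkeeping rather than the derivatives: verifying that $M-2kc(\hat e)=2u_a^{\mathrm{acc}}>0$ throughout the relevant range (so the derivative signs are valid), handling the corners where $\hat e$ or $e_a^{\mathrm{det}}$ reaches $0$ or $1$, and confirming that the feasibility branch $e_a^{\mathrm{det}}(k)=1$ (when it is the binding definition of $\underline k$) moves in the same direction as the indifference branch, so that $\underline k=\max\{k^{\mathrm{indiff}},k^{\mathrm{feas}}\}$ inherits the monotonicity from both. I would dispatch these by expressing every object through the concave program $\theta f-kc$ and its monotone maximizer, which keeps each comparative static tied to the single inequality $e_a^{\mathrm{det}}>\hat e$ together with $f''\le0\le c''$.
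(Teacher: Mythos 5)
Your proof follows essentially the same route as the paper's: collapse each issuer's payoff into the separable one-dimensional program $\theta f(e)-kc(e)$ with $\theta=K/(2\beta)$, so that interior best responses are rival-independent and symmetric accommodation splits the market; define the deterrence level from $u_b=0$; obtain $\bar k$ from $e_a^{\mathrm{det}}=0$ and $\underline k$ from the interplay of feasibility ($e_a^{\mathrm{det}}\le 1$) and indifference ($\Delta=0$); and derive the comparative statics by shifting both conditions. Two small remarks on that last part. First, your execution is actually tighter than the paper's: the cancellation via $b$'s first-order condition $kc'(\hat e)=\theta f'(\hat e)$, the formula $\partial\Delta/\partial M=(1-R)/2$ with $R\ge 1$, and the observation that $\theta$ falls as $\beta$ rises (so $\hat e$ falls while $e_a^{\mathrm{det}}$ rises) are correct, whereas the paper asserts that $\tilde e$ is \emph{increasing} in $\beta$ — the opposite sign — though both arguments reach the same conclusion that $c(e_a^{\mathrm{det}})-c(\hat e)$ grows and feasibility tightens, pushing $\underline k$ up. Second, Lemma \ref{lemma:k}'s proof allows $\Delta$ to cross zero up to twice (down, then up), not "only once"; this does not damage your logic, since the monopoly region is the upper interval on which both conditions hold, and shrinking that region pointwise still moves its left endpoint $\underline k$ to the right.

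The genuine gap is at the start: you never rule out the SPE outcome $(e_a^*,e_b^*)=(\emptyset,0)$, in which $a$ exits and $b$ monopolizes. Lemma \ref{lemma:k} only guarantees that for $k\ge\underline k$ \emph{some} issuer attains a monopoly; parts (i) and (ii) of the proposition assert that it is $a$, with $e_b^*=\emptyset$. Your argument for $k<\bar k$ compares only the deterrence payoff $M-kc(e_a^{\mathrm{det}})$ against the accommodation payoff $\tfrac M2-kc(\hat e)$, implicitly assuming $a$ always plays one of these two. But $a$ has a third option, $e_a=\emptyset$ with payoff $0$: if there were a range of $k$ in which accommodation yields $u_a\le 0$ while deterrence is infeasible ($e_a^{\mathrm{det}}>1$) or unprofitable, the equilibrium would be $b$'s monopoly and (i)--(ii) would fail. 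This is precisely the first step of the paper's proof: under symmetry, at the cutoff $k'$ where $u_a(\tilde e,\tilde e;k')=0$ we also have $u_b(\tilde e,\tilde e;k')=0$, so the accommodation commitment $\tilde e(k')\le 1$ itself already deters $b$; hence deterrence is feasible and yields $M-k'c(e_a^{\mathrm{det}}(k'))\ge M-k'c(\tilde e(k'))=\tfrac M2>0$, so in the notation of Lemma \ref{lemma:k} one obtains $k'\ge\dot k$, and the region $[k',\dot k)$ where $(\emptyset,0)$ could arise is empty. Without this step — or an equivalent argument that $a$'s exit option never binds — the identification of the monopolist as the first mover in (i)--(ii) is unproven. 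The fix is short and uses exactly the symmetry observation you already invoke for part (iii), but it must be stated.
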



Given that the users are symmetric, when issuers choose the same commitment level, each user's position in the network does not make them favor any currency. Since the two currencies are identical to all users, no matter how important a user is in the network, it chooses to use both currencies equally. That is, $d_i = \gamma_i \kappa_i = 0$ for any user $i$ in this case (see Lemma \ref{lem:d_kappa}). Therefore, users' centralities do not affect their currency choice as long as the issuers choose the same commitment level. In other words, the network structure is neutral to the results under this setting.

\begin{figure}[ht!]
	\centering
	\includegraphics[width=0.5\textwidth]{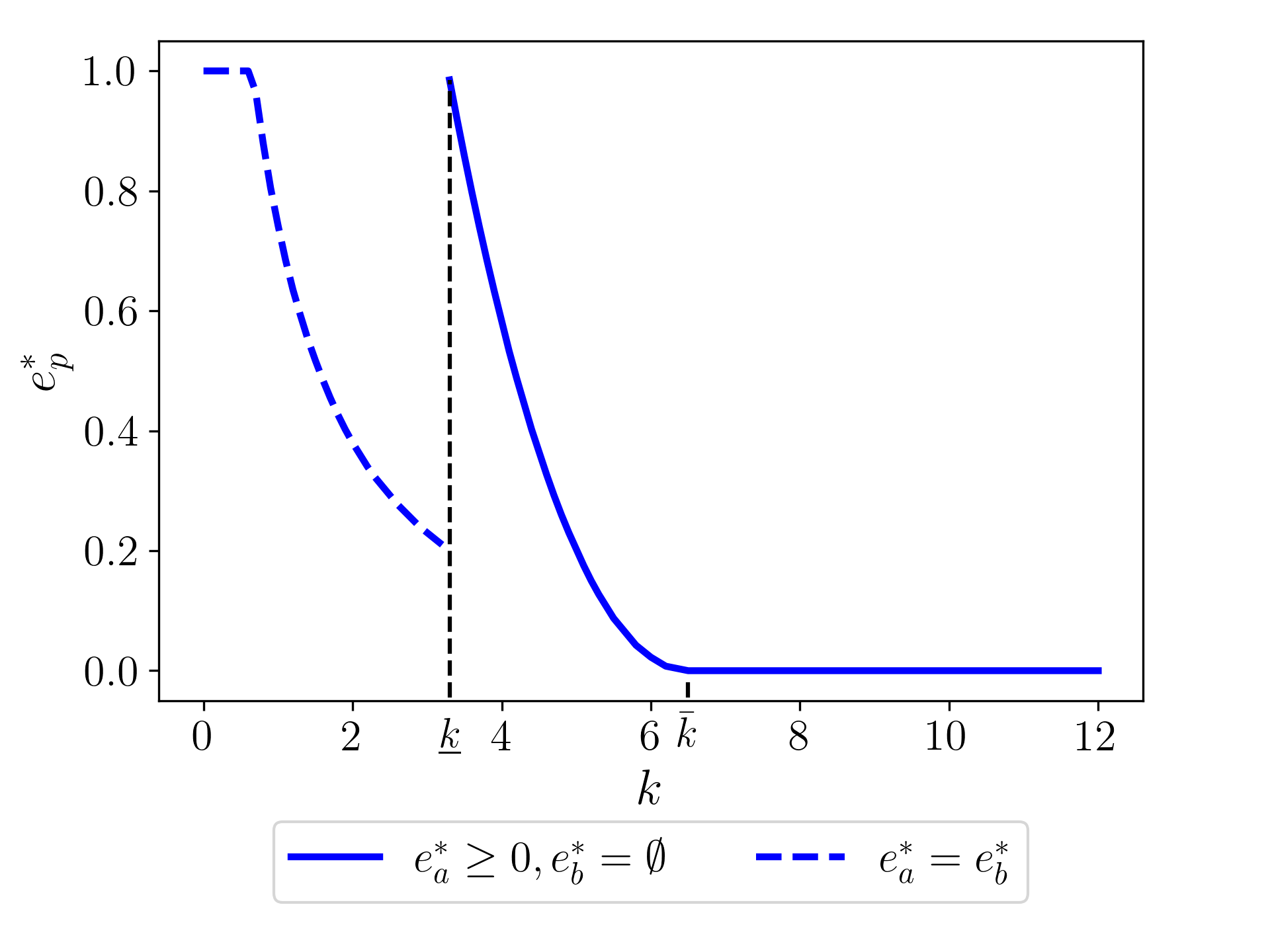}
	\caption{Commitment levels of issuers}
	\label{fig:sym}
\end{figure}

Figure \ref{fig:sym} depicts an example of the equilibrium commitment levels as a function of $k$, where the user network is generated randomly.\footnote{$n=12, m_i = 1 ~\forall i \in N, \beta=2.2, f(e) = e^{0.5}$, $k = 1$, and $c(e) = \exp{(e)}$.}
Intuitively, when the cost of liquidity provision is very high relative to the benefit from global usage, the incumbent can deter the challenger with a low commitment level.  This is because the incumbent issuer anticipates that even without a strong commitment, the challenger would incur a loss if it decided to enter the market due to the high costs involved. 
As the cost of internationalizing a currency decreases, the incumbent must raise its commitment level to deter the challenger. The reduced cost lowers the barrier for the challenger to enter the market, thereby increasing its incentive to internationalize its currency. Consequently, the incumbent needs to commit more substantially to maintain its dominant position.
When the cost is sufficiently low, both issuers have the capacity to internationalize their currencies with high commitment levels. In this scenario, the benefit of global currency usage outweighs the cost for both issuers, leading to a more competitive environment where multiple currencies can coexist internationally.

This proposition suggests that a multipolar payment system may emerge if the cost of liquidity provision becomes sufficiently low relative to the potential benefits. Historically, internationalizing a currency has been costly---requiring opening capital accounts, establishing an international payment system, and building supporting institutions. The US made great efforts to promote the use of its currency worldwide as a first mover, including the establishment of the Bretton Woods Institutions, the World Trade Organization, the petrodollar system, etc.  However, the cost of internationalizing is arguably decreasing as  advancements in payment technologies can reduce monitoring costs for issuers while enhancing security and transparency. As access to these technologies improves, the costs of liquidity provision decrease, making the benefits of enhanced international status more pronounced for issuing countries.

Proposition  \ref{prop:1stmover} also states that the threshold $\underline{k} $ that determines the adoption of the two currencies is increasing in $\beta$. This is because when both currencies are risky, it is harder for the incumbent to maintain its monopoly as users prefer to diversify.
Moreover, when the demand for international currencies $M$ grows, the commitment cost becomes lower than the benefit that the issuers can gain from liquidity provision. Therefore, it becomes harder for the incumbent to deter the challenger and thus the threshold $\underline{k}$ increases.
The change in the market share of the two issuers as $\beta$ and $M$ change is depicted in Figure \ref{fig:box_beta} and Figure \ref{fig:box_m}, respectively.

\begin{figure}[ht!]
	\centering
	\begin{subfigure}{0.49\textwidth}
		\includegraphics[width=\textwidth]{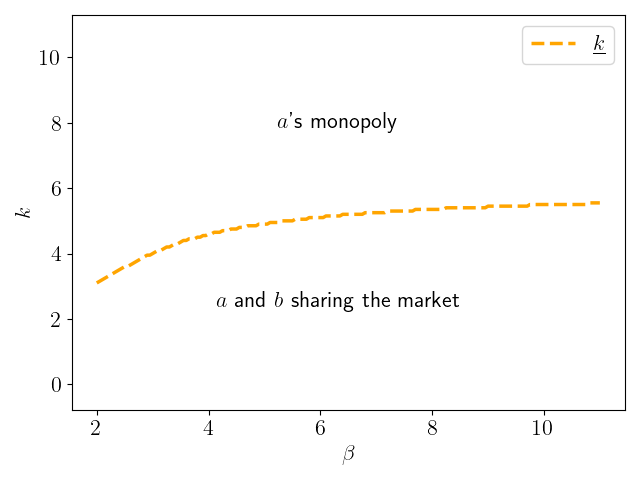}
		\caption{Market share as $\beta$ changes}
		\label{fig:box_beta}
	\end{subfigure}
	\hfill
	\begin{subfigure}{0.49\textwidth}
		\includegraphics[width=\textwidth]{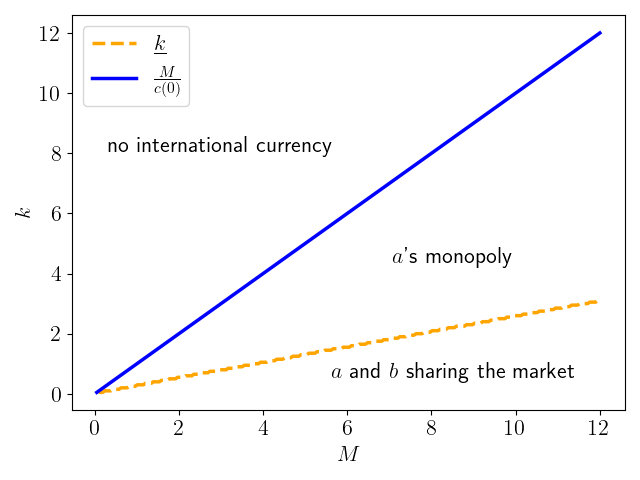}
		\caption{Market share as $M$ changes}
		\label{fig:box_m}
	\end{subfigure}	
	\caption{Distribution of the market share}
	\label{fig:box}
\end{figure}

It is worth noting that if $a$ and $b$ choose their commitment levels simultaneously, then when $0 \leq k < \underline{k}$, the two issuers choose the same commitment level and share the market equally in
 the unique equilibrium. However, when $\underline{k} \leq k \leq \frac{M}{c(0)}$, there are two pure-strategy equilibria, in which either issuer internationalizes its currency. The sequential game enables us to refine the equilibria and select a unique outcome where the incumbent  deters the emergence of a new global currency. This refinement is reasonable in the context of currency competition, as the incumbent enjoys a natural advantage: it benefits from established trust, built financial infrastructures, deep and liquid asset markets, and entrenched network effects that make it costly for users to switch. These advantages allow the incumbent issuer to preempt potential challengers by committing just enough to maintain dominance without inviting competition.
 
\subsection{Home bias and centrality}
 
It is reasonable to assume that some users are not neutral regarding the choice of international currencies for trade settlement. For example, the US users would prefer the USD to any other currencies since they can settle both domestic and international transactions in the USD, thereby reducing the cost associated with the exchange rate risks. We capture these asymmetries with the following definition.
\begin{definition}
 	User $i$ prefers currency $p$ over $-p$ if $f_{ip}(e) = \mu_i f_{i,-p}(e)$ with  $\mu_i > 1$.
\end{definition}

While the centrality considering paths originating from each user plays a crucial role in currency choice (see Lemma \ref{lem:d_kappa}), the cost function suggests that a user with many paths pointing toward itself has greater influence over others' currency choices. To simplify the analysis, we assume that network links are two-way and symmetric, meaning that  $w_{ij} = w_{ji}$ for any pair $(i, j)$. Under this assumption, the network $\bm{w}$ is undirected, ensuring that the effects of outgoing and incoming connections on  currency choices are balanced. To show how issuers compete depending on network positions of home users, we focus on the range of $k$ where the incumbent does not have an advantage (that is, $k<\underline{k}$). Let $\bm{\kappa}  \coloneqq (\mathbb{I} - \frac{1}{\beta} \bm{w})^{-1} \bm{1}$ be users' Katz-Bonacich centrality vector. We have the following result.
 
\begin{proposition} \label{prop:katz}
Suppose that there are two distinct users $i$ and $j$ with opposite preferences, i.e.,  $f_{ip}(e) = \mu f_{i,-p}(e)  = f_{j,-p}(e) = \mu f_{jp}(e) $  where $\mu > 1$. 
Let all other factors be symmetric: $\bm{w}$ is undirected, $m_h = m ~\forall h \in N$, and $f_{ha}(e) = f_{hb}(e) = f(e)$ for all  users $h \in N \backslash \{i, j\}$.
If $\kappa_i > \kappa_j$ and $k < \underline{k}$, then $e^*_p \geq e^*_{-p}$ and $X^*_p > X^*_{-p}$.
\end{proposition}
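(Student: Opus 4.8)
The plan is to solve the three-stage game backward and to exploit the linear--quadratic structure, which---crucially---renders the two issuers' optimization problems \emph{separable}, reducing the claim to a one-dimensional monotone comparative statics argument. I work throughout in the regime $k<\underline{k}$, where Lemma \ref{lemma:k}(i) guarantees that in the SPE both currencies are internationalized, so that $e_a^*,e_b^*\in[0,1]$ and no deterrence occurs.

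First I would read off the marginal-cost vector $\bm{\gamma}$ from the preference configuration. Writing $f$ for the common baseline liquidity and $\gamma_0\coloneqq\tfrac1\beta\big(f(e_p)-f(e_{-p})\big)$, the hypothesis gives $\gamma_i=\gamma_0+\tfrac{\mu-1}{\beta}f(e_p)$, $\gamma_j=\gamma_0-\tfrac{\mu-1}{\beta}f(e_{-p})$, and $\gamma_h=\gamma_0$ for every $h\neq i,j$. Because $\bm{w}$ is undirected, the resolvent $\bm{B}\coloneqq(\mathbb{I}-\tfrac1\beta\bm{w})^{-1}$ is symmetric, so $\bm{1}^\top\bm{B}=\bm{\kappa}^\top$ and Lemma \ref{lem:d_kappa} gives $X_p-X_{-p}=\sum_h d_h=\bm{1}^\top\bm{B}\,\bm{\gamma}=\sum_h\kappa_h\gamma_h$. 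Substituting the three types of $\gamma_h$ and abbreviating $K\coloneqq\sum_h\kappa_h$ collapses the sum (up to relabeling which currency is $a$) to
\[
X_p-X_{-p}=\tfrac1\beta\big[\phi_p\,f(e_p)-\phi_{-p}\,f(e_{-p})\big],\qquad \phi_p\coloneqq K+(\mu-1)\kappa_i,\quad \phi_{-p}\coloneqq K+(\mu-1)\kappa_j,
\]
so that $\kappa_i>\kappa_j$ and $\mu>1$ yield $\phi_p>\phi_{-p}>0$. Together with $X_p+X_{-p}=M$, this shows that each issuer's total usage is additively separable in $f(e_p)$ and $f(e_{-p})$.

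The payoff of issuer $p$ is then $u_p=\tfrac M2+\tfrac{\phi_p}{2\beta}f(e_p)-kc(e_p)-\tfrac{\phi_{-p}}{2\beta}f(e_{-p})$, in which the rival's commitment enters only as an additive constant; hence each best response solves $\max_{e\in[0,1]}\Psi(e;\phi)$ with $\Psi(e;\phi)\coloneqq\tfrac{\phi}{2\beta}f(e)-kc(e)$ and is independent of the rival's action. Consequently the sequential (first-mover) structure is immaterial in this regime, and by Lemma \ref{lemma:k}(i) the incumbent does not deter, so $e_p^*=\argmax_{e}\Psi(e;\phi_p)$ and $e_{-p}^*=\argmax_{e}\Psi(e;\phi_{-p})$ (ties broken upward by Assumption \ref{as:open}, which pins down a unique SPE). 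Since $\partial^2\Psi/\partial e\,\partial\phi=\tfrac{1}{2\beta}f'(e)>0$, $\Psi$ has strictly increasing differences; adding the two optimality inequalities for $e_p^*$ and $e_{-p}^*$ cancels the cost terms and gives $(\phi_p-\phi_{-p})\big(f(e_p^*)-f(e_{-p}^*)\big)\ge0$, whence $f(e_p^*)\ge f(e_{-p}^*)$ and, as $f$ is strictly increasing, $e_p^*\ge e_{-p}^*$. Substituting back into the displayed identity then yields $X_p^*\ge X_{-p}^*$.

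The main obstacle is upgrading this to the \emph{strict} inequality $X_p^*>X_{-p}^*$. The comparative-statics step delivers only $e_p^*\ge e_{-p}^*$, and if both issuers committed $e_p^*=e_{-p}^*=0$ while $f(0)=0$ (as for the paper's leading example $f(e)=e^{1/2}$), the identity would give $X_p^*=X_{-p}^*$. I would close the gap by ruling out a zero commitment by the favored issuer: when $e_p^*>e_{-p}^*$, strict monotonicity of $f$ already forces $\phi_p f(e_p^*)>\phi_{-p}f(e_{-p}^*)$, while when $e_p^*=e_{-p}^*$ the gap equals $\tfrac1\beta(\phi_p-\phi_{-p})f(e_p^*)$, which is strictly positive provided $f(e_p^*)>0$, i.e.\ $e_p^*>0$. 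Establishing $e_p^*>0$ in the regime $k<\underline{k}$---immediate under an Inada-type condition $f'(0^+)=\infty$, which the example satisfies, and otherwise following because $\Psi(\cdot;\phi_p)$ has uniformly larger marginal benefit than $\Psi(\cdot;\phi_{-p})$ whenever entry is profitable---is the delicate point on which the strict conclusion ultimately rests.
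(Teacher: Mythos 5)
Your proof is correct and, at its core, takes the same route as the paper's own: both arguments rest on the identity $X_p - X_{-p} = \bm{\kappa}^\top \bm{\gamma}$, which holds because an undirected $\bm{w}$ makes the resolvent $(\mathbb{I}-\tfrac1\beta\bm{w})^{-1}$ symmetric, and both then rank the issuers' stand-alone best responses using $\kappa_i > \kappa_j$. The differences are in packaging, and they mostly favor your version. The paper writes the identity as $X_a - X_b = \sum_g \kappa_g \bar{\gamma} + \kappa_i \Delta_i + \kappa_j \Delta_j$ and ranks best responses by comparing the marginal benefits $\partial X_a/\partial e_a$ and $\partial X_b/\partial e_b$ at a common $e$; you instead make the additive separability of $u_p$ in $(e_p,e_{-p})$ explicit---so that in the sharing regime each issuer has a dominant action and the move order is irrelevant---and you obtain $e_p^* \geq e_{-p}^*$ by a revealed-preference (increasing-differences) argument, which handles corner solutions and possible non-uniqueness of the maximizer without any differentiation.

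The ``delicate point'' you flag at the end is genuine, and it is a gap in the paper's proof rather than in yours. The paper claims that equality of commitments ``holds only if $e^*_a = e^*_b = 1$,'' overlooking the lower corner: under the stated assumptions ($f$ differentiable on $[0,1]$, hence $f'(0)$ finite, with $f(0)=0$ permitted, and $c'(0)>0$ permitted), one can choose parameters so that $\tfrac{\phi_p}{2\beta} f'(0) \leq k c'(0)$ (in your notation) while deterrence is infeasible, giving $k<\underline{k}$, $e_p^* = e_{-p}^* = 0$, and $X_p^* = X_{-p}^* = \tfrac{M}{2}$---contradicting the strict inequality $X_p^* > X_{-p}^*$. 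For instance, $f(e)=e$, $c(e)=1+e$, weak link weights, $\beta$ around $3$, and $k$ in an intermediate range produce exactly this outcome. So the proposition implicitly requires $f(0)>0$, an Inada condition at $0$ (which the paper's example $f(e)=e^{1/2}$ satisfies, though it violates differentiability at $0$), or some other guarantee that $e_p^*>0$; your write-up is the one that makes this hidden requirement visible.
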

 
Under the conditions that a) interdependencies between users are two-way and symmetric, b) there are two users with opposite preferences, and c) cost is sufficiently low so that both issuers decide to internationalize their currencies, Proposition \ref{prop:katz} states that the issuer of the currency preferred by the user with a higher centrality commits more strongly and captures a larger market share. The intuition is as follows. First, since $\kappa_i > \kappa_j$, we have $x_{ip} > x_{j,-p}$ given the same commitment by the two issuers. Since the bilateral weight is symmetric, the total cross-effect that user $i$'s decision has on other users must be larger than that of $j$, giving the issuer of the currency favored by $i$ (i.e., issuer $p$) a higher incentive to raise its commitment. Due to this feedback mechanism, in the SPE outcome, $p$ chooses a higher commitment than $-p$ and obtains a larger market share. Therefore, the network centrality of the users with a stronger preference for a particular currency plays an important role in shaping the currency competition.
 
Since the home currency is the primary medium of exchange for domestic transactions,  a larger domestic economy implies higher liquidity of the home currency for the home-country user. Thus, a larger economic size relative to trade incentivizes users to adopt their home currency for both domestic and cross-border transactions, reducing exposure to exchange rate risk and hedging costs. The US has a significantly higher GDP-to-trade ratio (3.84 in 2022) than China (2.7 in 2022), implying a stronger home bias among US users. Given that both home bias and centrality are higher for the US, our model suggests that China has weaker incentives to internationalize its currency, leading to its limited adoption.

\subsection{Network Integration}

In this section, we investigate how network integration influences the outcomes of the game. For any two networks $\bm{w}$ and $\bm{w'}$, we say $\bm{w'}$ is strictly more integrated than $\bm{w}$ and write $\bm{w'}  \gneq \bm{w}$ if $w'_{ij} \geq w_{ij}$ for all $(i,j )\in N^2$ with at least one strict inequality. 
Let $\underline{\beta}(\bm{w})$ denote the minimum values of $\beta$ that ensure a unique, finite, and non-negative usage allocation by the users given network $\bm{w}$ and any pair $(e_a, e_b)$. Throughout this section, we assume that $\beta \geq \max \{\underline{\beta}(\bm{w}), \underline{\beta}(\bm{w'})\}$. 
We have the following result.

\begin{lemma} \label{lemma:dX}
	Suppose that $\bm{w'}  \gneq \bm{w}$. Given any $\bm{m}$ and $\beta$,   
	\begin{itemize}
		\item[(i)] 
		Commitments are more effective under $\bm{w'}$, i.e.,  $\frac{\partial X_p (\bm{w'})}{\partial e_p}  >  \frac{\partial X_p (\bm{w})}{\partial e_p}$ for any $p \in \{a,b\}$.
		\item[(ii)] Suppose that all users prefer currency $p$. Then, the increase in commitment effectiveness due to network integration is larger for $p$, i.e., $\frac{\partial X_p (\bm{w'})}{\partial e_p}  - \frac{\partial X_p (\bm{w})}{\partial e_p}  >  \frac{\partial X_{-p} (\bm{w'})}{\partial e_{-p}} -  \frac{\partial X_{-p} (\bm{w})}{\partial e_{-p}} $. 
	\end{itemize}
\end{lemma}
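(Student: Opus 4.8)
The plan is to reduce both statements to a single bilinear form in the resolvent matrix $\bm{B}(\bm{w}) \coloneqq (\mathbb{I} - \frac{1}{\beta}\bm{w})^{-1} = \sum_{\ell=0}^{\infty}\beta^{-\ell}\bm{w}^{\ell}$ and then exploit the entrywise monotonicity of this Neumann series in $\bm{w}$. First I would rewrite total usage in terms of the usage-difference vector $\bm{d}$. Summing the optimal usages in (\ref{foc:usage}) gives $X_a = \frac{1}{2}(M + \bm{1}^{\top}\bm{d})$ and $X_b = \frac{1}{2}(M - \bm{1}^{\top}\bm{d})$, while by (\ref{foc:d}) we have $\bm{d} = \bm{B}(\bm{w})\bm{\gamma}$ with $\gamma_i = \frac{1}{\beta}\big(f_{ia}(e_a) - f_{ib}(e_b)\big)$. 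Differentiating and writing $\bm{f}_p' \coloneqq (f_{1p}'(e_p),\dots,f_{np}'(e_p))^{\top}$, the chain rule yields the compact expressions
\begin{equation*}
\frac{\partial X_a}{\partial e_a} = \frac{1}{2\beta}\,\bm{1}^{\top}\bm{B}(\bm{w})\,\bm{f}_a', \qquad \frac{\partial X_b}{\partial e_b} = \frac{1}{2\beta}\,\bm{1}^{\top}\bm{B}(\bm{w})\,\bm{f}_b',
\end{equation*}
where the sign flip from $\partial\gamma_i/\partial e_b = -f_{ib}'/\beta$ cancels against the minus sign in $X_b$, so both marginal effects share the same structure.

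Next I would record the monotonicity input. Since $\bm{w}$ and $\bm{w'}$ are non-negative and $\bm{w'} \gneq \bm{w}$, a short induction on $\ell$ shows $\bm{w'}^{\ell} \geq \bm{w}^{\ell}$ entrywise for every $\ell \geq 0$, with at least one strict inequality already at $\ell = 1$. Summing the Neumann series term by term (legitimate because $\beta \geq \max\{\underline{\beta}(\bm{w}),\underline{\beta}(\bm{w'})\}$ exceeds $\max\{r(\bm{w}),r(\bm{w'})\}$ and guarantees convergence), the matrix $\bm{\Delta} \coloneqq \bm{B}(\bm{w'}) - \bm{B}(\bm{w})$ is entrywise non-negative with at least one strictly positive entry. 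Part (i) is then immediate: the difference of marginals equals $\frac{1}{2\beta}\bm{1}^{\top}\bm{\Delta}\,\bm{f}_p' = \frac{1}{2\beta}\sum_{i,j}\Delta_{ij}f_{jp}'(e_p)$, and since $f_{jp}' > 0$ by strict monotonicity of $f$, every summand is non-negative and at least one is strictly positive, so the sum is strictly positive.

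For part (ii) I would subtract the two marginal increments and factor out $\bm{\Delta}$, obtaining
\begin{equation*}
\Big(\tfrac{\partial X_p(\bm{w'})}{\partial e_p} - \tfrac{\partial X_p(\bm{w})}{\partial e_p}\Big) - \Big(\tfrac{\partial X_{-p}(\bm{w'})}{\partial e_{-p}} - \tfrac{\partial X_{-p}(\bm{w})}{\partial e_{-p}}\Big) = \frac{1}{2\beta}\,\bm{1}^{\top}\bm{\Delta}\,(\bm{f}_p' - \bm{f}_{-p}').
\end{equation*}
Because all users prefer $p$, the preference definition gives $f_{ip}' = \mu_i f_{i,-p}'$ with $\mu_i > 1$, hence $\bm{f}_p' - \bm{f}_{-p}' > 0$ entrywise, and the same non-negative-matrix times positive-vector argument forces the right-hand side to be strictly positive.

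The main obstacle is less the algebra than pinning down the two structural facts exactly: (a) that entrywise dominance of $\bm{w'}$ over $\bm{w}$ propagates through all matrix powers and survives the infinite summation into a genuinely nonzero $\bm{\Delta}$, and (b) that the bilinear form $\bm{1}^{\top}\bm{\Delta}\,\bm{v}$ is \emph{strictly} positive whenever $\bm{\Delta}\geq 0$ has a positive entry and $\bm{v}>0$ --- this is precisely where strict (not merely weak) monotonicity of the $f$'s enters. A secondary point worth flagging is the evaluation of the marginals in part (ii): the entrywise inequality $\bm{f}_p' > \bm{f}_{-p}'$ requires comparing $f_{ip}'$ and $f_{i,-p}'$ at a common argument, so I would phrase the comparison of commitment effectiveness at a common commitment level $e$, where $f_{ip}'(e) = \mu_i f_{i,-p}'(e) > f_{i,-p}'(e)$ holds directly.
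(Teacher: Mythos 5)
Your proof is correct and takes essentially the same route as the paper's: both express the marginal effect $\frac{\partial X_p}{\partial e_p}$ through the resolvent $\sum_{\ell=0}^{\infty}\beta^{-\ell}\bm{w}^{\ell}$, use entrywise monotonicity of this Neumann series in $\bm{w}$ for part (i), and invoke $f_{ip}'(e) = \mu_i f_{i,-p}'(e)$ with $\mu_i>1$ at a common commitment level for part (ii). Your write-up is merely more explicit than the paper's on two points it leaves implicit --- the induction establishing $\bm{w'}^{\ell}\geq\bm{w}^{\ell}$ entrywise and the sign cancellation making $\partial X_b/\partial e_b$ structurally identical to $\partial X_a/\partial e_a$.
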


Lemma \ref{lemma:dX} states that network integration increases users' response to issuers' commitment. Specifically, an increase in an issuer's commitment results in a larger increase in the usage of its currency when the network is more integrated. This network integration effect is amplified for a currency that is preferred by users.

In the following arguments, we study the influence of network integration when players are symmetric  as  in Section \ref{symmetric_case}. We show that when the network is more integrated, $a$  can become more committed, deter $b$, and be the sole global currency issuer.
\begin{proposition} \label{prop:w} Suppose that the users are symmetric. Given any $\bm{m}$, if $\bm{w'}  \gneq \bm{w}$, then $\underline{k}(\bm{w'}) < \underline{k}(\bm{w})$.
\end{proposition}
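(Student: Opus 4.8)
The plan is to collapse the entire symmetric game onto a single scalar statistic of the network, the aggregate Katz--Bonacich centrality $K \coloneqq \sum_{i \in N}\kappa_i = \bm{1}^\top \bm{\kappa}$ with $\bm{\kappa} = (\mathbb{I}-\tfrac{1}{\beta}\bm{w})^{-1}\bm{1}$, and then to prove two things: (a) $K$ is strictly increasing under integration, and (b) the threshold $\underline{k}$ is strictly decreasing in $K$. First I would record the reduction. With symmetric users, $\gamma_i(e_a,e_b)=\tfrac{f(e_a)-f(e_b)}{\beta}\eqqcolon\gamma$ is common to all users, so Lemma \ref{lem:d_kappa} gives $\bm{d}=\gamma\bm{\kappa}$ and hence $X_a=\tfrac12(M+\gamma K)$, $X_b=\tfrac12(M-\gamma K)$; thus every issuer payoff, and so $\underline{k}$ itself, depends on $\bm{w}$ only through $K$ (this is also what makes Lemma \ref{lemma:dX}(i) transparent, since $\partial X_p/\partial e_p = \tfrac{K f'(e_p)}{2\beta}$). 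Monotonicity of $K$ is then immediate from $\bm{\kappa}=\sum_{\ell\ge0}\beta^{-\ell}\bm{w}^\ell\bm{1}$: for nonnegative matrices $\bm{w'}\ge\bm{w}$ entrywise gives $\bm{w'}^\ell\ge\bm{w}^\ell$ entrywise for all $\ell$ by induction, and the $\ell=1$ term is already strictly larger since some $w'_{ij}>w_{ij}$, so $K(\bm{w'})>K(\bm{w})$.

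Next I would characterize the two continuation values that define the threshold, as functions of $(k,K)$. Given any $e_a$, the challenger's entry problem $\max_{e_b}\tfrac12\bigl(M-\tfrac{f(e_a)-f(e_b)}{\beta}K\bigr)-kc(e_b)$ has an optimum $\hat e(k,K)$ whose first-order condition $\tfrac{Kf'(\hat e)}{2\beta}=kc'(\hat e)$ does not involve $e_a$. This yields the accommodation value $u_a^{\mathrm{acc}}=\tfrac{M}{2}-kc(\hat e)$ (the symmetric sharing equilibrium of Proposition \ref{prop:1stmover}) and, by setting the challenger's best entry payoff to zero, the minimal deterring commitment $e_a^d$ characterized by $f(e_a^d)=f(\hat e)+\tfrac{\beta(M-2kc(\hat e))}{K}$, with deterrence value $u_a^{\mathrm{det}}=M-kc(e_a^d)$. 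Writing $g(k,K)\coloneqq u_a^{\mathrm{det}}-u_a^{\mathrm{acc}}=\tfrac{M}{2}-k\bigl(c(e_a^d)-c(\hat e)\bigr)$, Proposition \ref{prop:1stmover} identifies $\underline{k}(K)$ as exactly the value of $k$ at which $g$ switches from negative (accommodate) to positive (deter); and since both issuers internationalize on the accommodation side, $u_a^{\mathrm{acc}}>0$ at the threshold, so $D\coloneqq f(e_a^d)-f(\hat e)=\tfrac{2\beta\,u_a^{\mathrm{acc}}}{K}>0$ and $e_a^d>\hat e$.

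The heart of the argument is the sign of $\partial g/\partial K$ at fixed $k$. Differentiating the FOC for $\hat e$ and using concavity of $f$ and convexity of $c$ (the second-order condition $kc''-\tfrac{Kf''}{2\beta}>0$) gives $\partial\hat e/\partial K\ge0$: the sharing commitment rises with integration. Differentiating the equation defining $e_a^d$ and substituting $kc'(\hat e)=\tfrac{Kf'(\hat e)}{2\beta}$ and $\tfrac{\beta(M-2kc(\hat e))}{K}=D$ produces a clean cancellation, $f'(e_a^d)\,\partial e_a^d/\partial K=-D/K<0$, so the deterring commitment \emph{falls} with integration. Hence
\[
\frac{\partial g}{\partial K}= -k\,c'(e_a^d)\frac{\partial e_a^d}{\partial K}+k\,c'(\hat e)\frac{\partial \hat e}{\partial K}>0,
\]
the first term being strictly positive and the second nonnegative. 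I would then conclude by direct comparison: at $k_0=\underline{k}(\bm{w})$ we have $g(k_0,K(\bm{w}))=0$, so by strict monotonicity $g(k_0,K(\bm{w'}))>0$; the incumbent strictly prefers to deter at $k_0$ under $\bm{w'}$, which by the region ordering of Proposition \ref{prop:1stmover} forces $\underline{k}(\bm{w'})<k_0=\underline{k}(\bm{w})$.

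The main obstacle I anticipate is twofold. The first is the cancellation delivering $f'(e_a^d)\,\partial e_a^d/\partial K=-D/K$, where the FOC and the deterrence identity must be substituted in exactly the right order; this is the one computation I would carry out with full care. The second is the treatment of boundary cases where the FOCs become inequalities: in the corner $\hat e=0$ one has $\partial\hat e/\partial K=0$, so the second term merely vanishes and the strict inequality $\partial g/\partial K>0$ survives from the first term alone, while feasibility $e_a^d\le1$ is automatically preserved under integration precisely because $e_a^d$ is decreasing in $K$. Establishing that the threshold crossing of $g$ in $k$ is clean (a single sign change, as guaranteed by Proposition \ref{prop:1stmover}) is what lets the direct comparison replace a more delicate implicit-function argument.
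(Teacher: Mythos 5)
Your proof is correct, and it rests on the same underlying mechanism as the paper's proof---integration raises the sharing commitment and lowers the deterring commitment, so deterrence becomes relatively more attractive and the threshold falls---but your execution differs in ways worth noting. The paper works link-by-link: it applies the envelope theorem to the challenger's value to get $\partial u_b(e_a,\tilde e)/\partial w_{gh}=-\tfrac{\gamma}{2}\sum_i \partial\kappa_i/\partial w_{gh}<0$, and cites Lemma \ref{lemma:dX} for the rise in the sharing level $\tilde e$ (your $\hat e$). You instead observe that under symmetry every payoff depends on $\bm{w}$ only through the scalar $K=\sum_i\kappa_i$, which collapses the matrix comparative static into a one-dimensional one and reduces the network monotonicity to an elementary power-series fact; you then derive the deterrence identity $f(e_a^d)=f(\hat e)+\beta(M-2kc(\hat e))/K$ in closed form. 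Your ``clean cancellation'' $f'(e_a^d)\,\partial e_a^d/\partial K=-D/K$ is exactly the paper's envelope step in disguise---implicitly differentiate $u_b(e_a^d,\hat e,K)=0$ and the $\partial\hat e/\partial K$ terms vanish because $\partial u_b/\partial e_b=0$ at the optimum---so that piece is a re-derivation rather than a new idea, while your explicit threshold function $g(k,K)$ and the final continuity-plus-feasibility argument are more careful than the paper, which essentially asserts the last implication. One imprecision: at $k_0=\underline{k}(\bm{w})$ you claim $g(k_0,K(\bm{w}))=0$, but the threshold can instead be pinned by the feasibility constraint $e_a^d=1$ with $g(k_0,K(\bm{w}))>0$. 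This is harmless: the threshold characterization in the proof of Lemma \ref{lemma:k} still gives $g(k_0,K(\bm{w}))\ge 0$ and $e_a^d(k_0,K(\bm{w}))\le 1$, and your strict monotonicity in $K$ then delivers strict preference for deterrence and strict feasibility under $\bm{w'}$ all the same.
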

Proposition \ref{prop:w} implies that network integration amplifies the incumbent advantage. 
Suppose $k$ is sufficiently low such that two issuers internationalize their currencies initially. As users become more interconnected, the incumbent has the incentive to be more committed  and as a result, it can deter the challenger and attain its monopoly. The effect of network integration when all 
users are symmetric
is illustrated in Figure \ref{fig:w}.\footnote{$n=12, m_i = 1 ~\forall i \in N, \beta=2.2, f(e) = e^{0.5}$, $k = 1$, and $c(e) = \exp{(e)}$.}
\begin{figure}[ht!]
	\centering
	\includegraphics[width=0.5\textwidth]{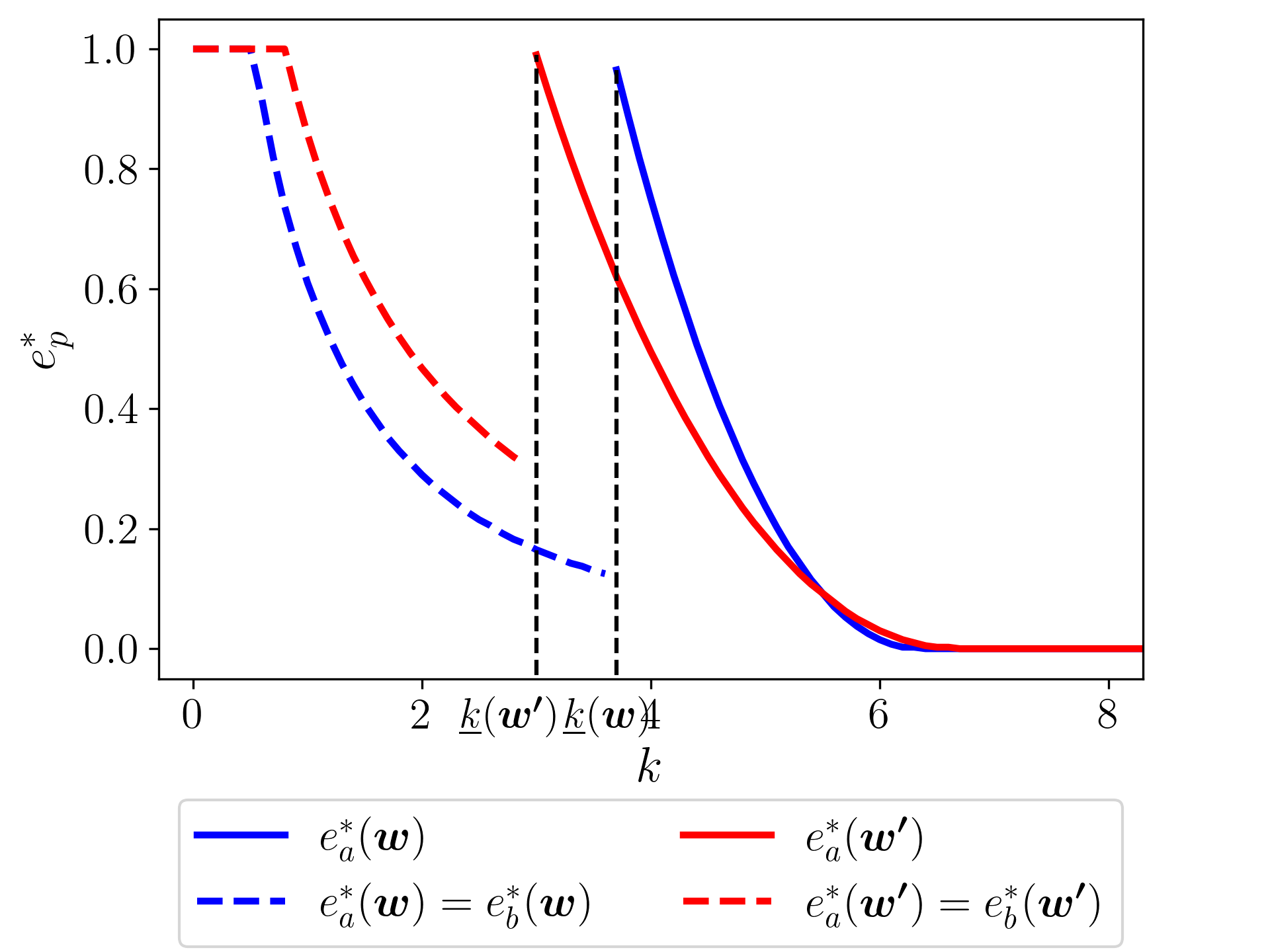}
	\caption{The effect of network integration on commitment levels: $\bm{w'} \gneq \bm{w}$}
	\label{fig:w}
\end{figure}

Lemma \ref{lemma:dX} and Proposition \ref{prop:w} imply that there are two effects that network integration has on issuers' behavior and the currency adoption in the equilibrium. First, since users respond more to issuers' commitment, network integration gives issuers, especially the issuer of the currency preferred by users, more incentive to provide liquidity. Second, it amplifies the advantage held by the incumbent.
When users have a preference for the emerging currency over the incumbent, which of these two effects dominates depends on the extent of the network integration as demonstrated in the following example.\footnote{We can have a more integrated network by adding new links and (or) increasing the weight of existing links. For simplicity, we only add new links in this example.}

\begin{figure}[ht!]
	\centering
	\begin{subfigure}{0.3\textwidth}
		\includegraphics[width=\textwidth]{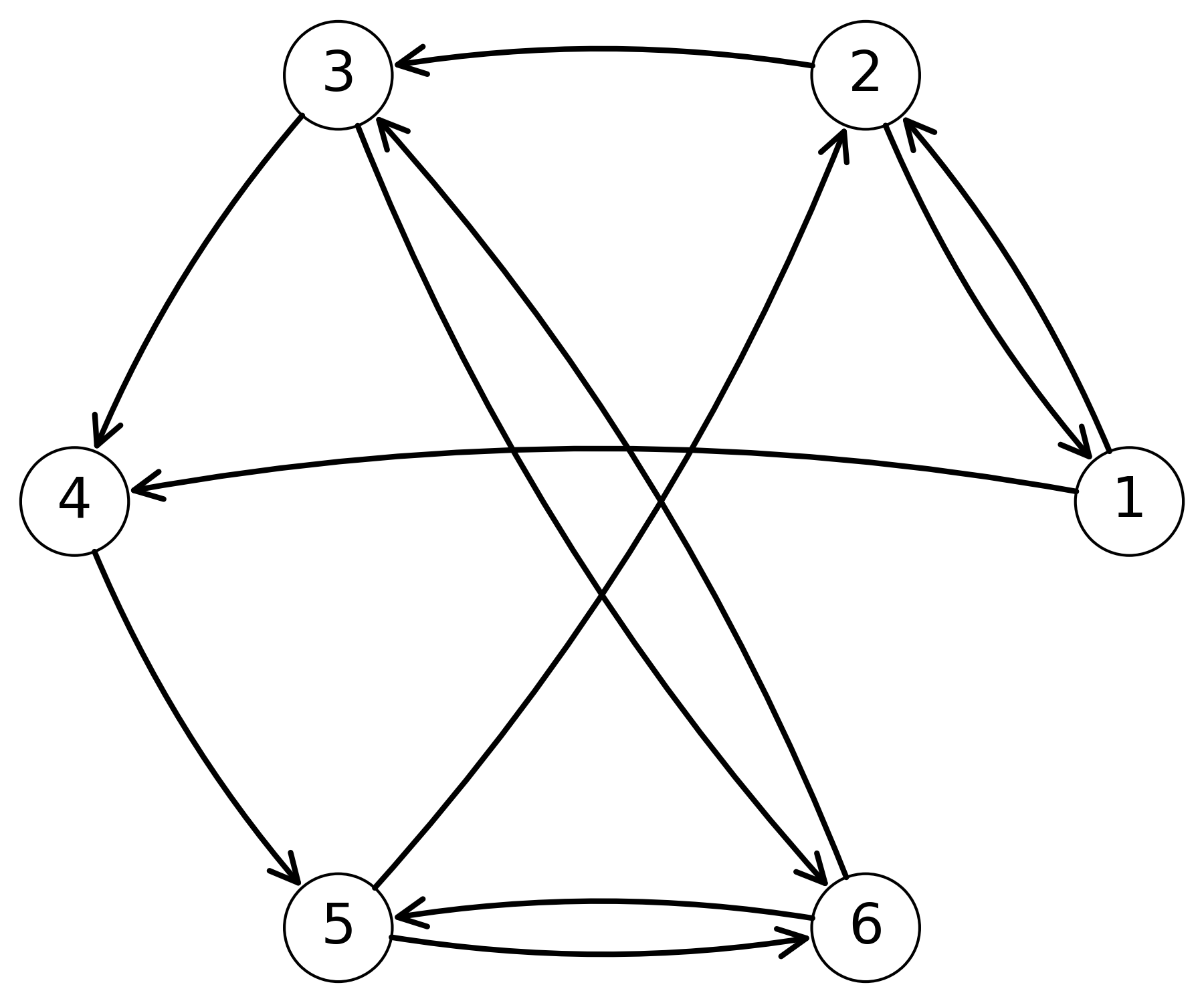}
		\caption{ $\bm{w}$}
		\label{fig:ex1}
	\end{subfigure}
	\quad
	\begin{subfigure}{0.3\textwidth}
		\includegraphics[width=\textwidth]{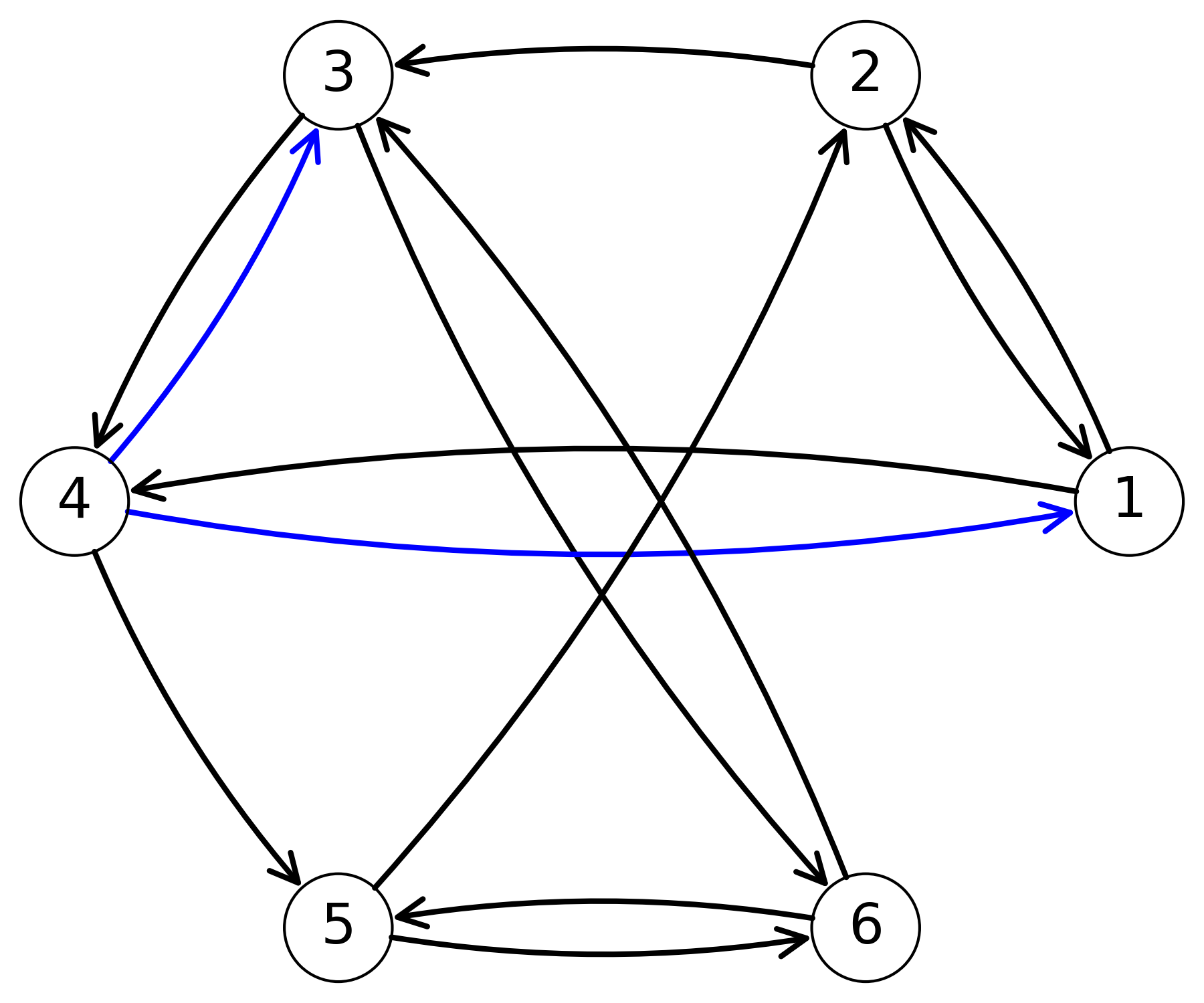}
		\caption{$\bm{w'}$}
		\label{fig:ex2}
	\end{subfigure}	
	\quad
	\begin{subfigure}{0.3\textwidth}
		\includegraphics[width=\textwidth]{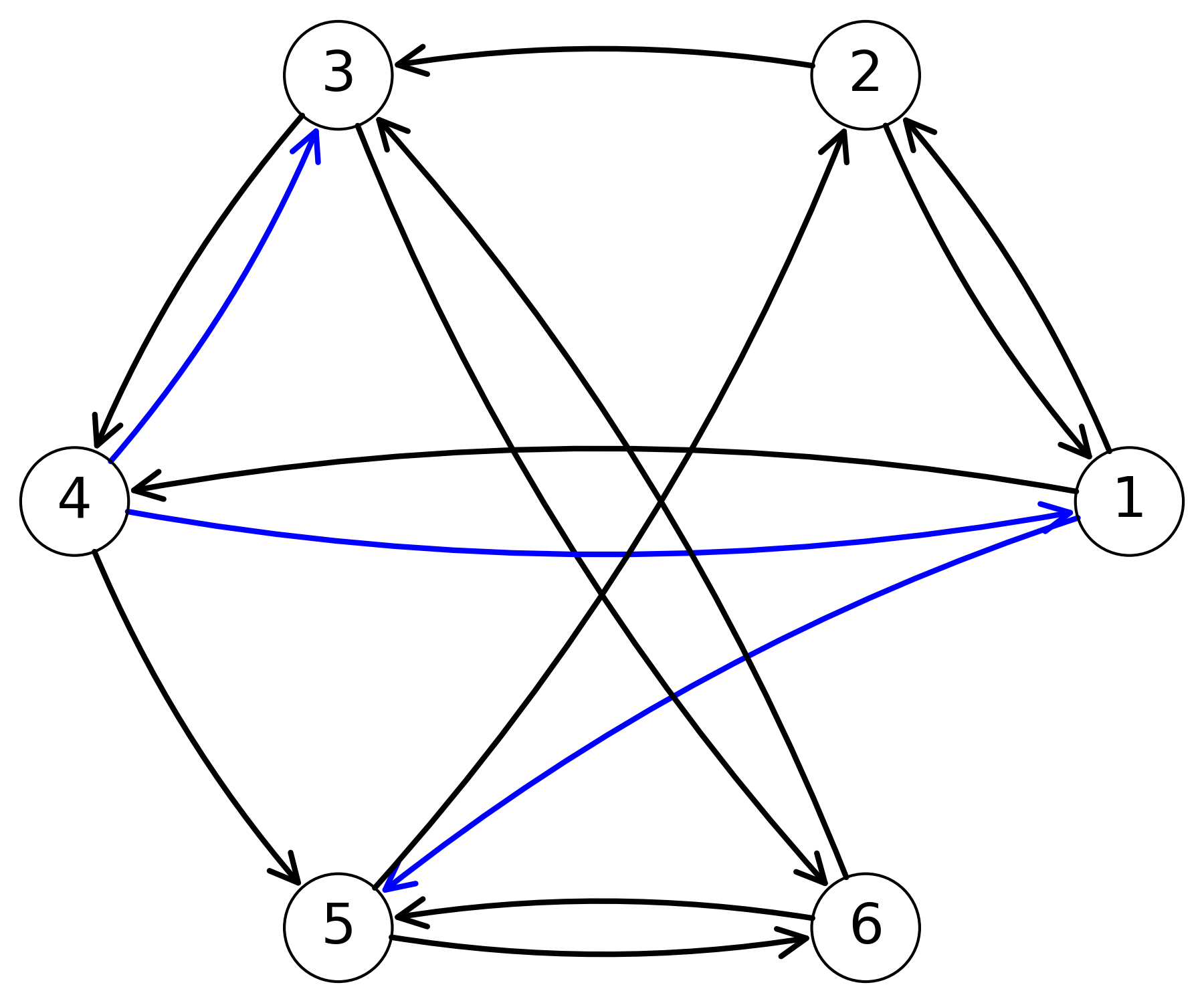}
		\caption{$\bm{w''}$}
		\label{fig:ex3}
	\end{subfigure}	
	\caption{An example of network integration}
	\label{fig:example}
\end{figure}

{\bf Example}: We introduce asymmetry favoring $b$ by setting $f(e_a)=e_a^{0.6}$ and $f(e_b)=1.1e_b^{0.6}$ and keeping other factors symmetric: $m_i = 1 ~\forall i, \beta = 2, k = 1.5$, and $c(e) = \exp(e)$. Consider first the case where we initially have network $\bm{w}$ with each link having weight 0.25 (Figure \ref{fig:ex1}).  Then, in the SPE outcome, $(e^*_a(\bm{w}), e^*_b(\bm{w}))= (0.27, 0.31)$, $X^*_a(\bm{w}) = 2.83$ and $X^*_b(\bm{w}) = 3.17$. Now, suppose that additional links of the same weight from user 4 are added to form $\bm{w'}$ (Figure \ref{fig:ex2}). Then, the stronger network effects increase the incentive for issuers to provide liquidity, especially for $b$ who issues the currency preferred by users. As a result, $b$ raises its commitment more than $a$ and gains a higher global adoption than when the network is less integrated. Specifically, in the new SPE outcome, $(e^*_a(\bm{w'}), e^*_b(\bm{w'}))= (0.29, 0.34)$, $X^*_a(\bm{w'}) = 2.79$ and $X^*_b(\bm{w'}) = 3.21$.

Now, we add an additional link into $\bm{w'}$ from user 1 to user 5, which gives us an even more integrated network $\bm{w''}$ (Figure \ref{fig:ex3}). If $a$ and $b$ kept the initial equilibrium commitments  corresponding to $\bm{w'}$ (i.e., $(0.29, 0.34)$), the higher network externalities would increase the usage of currency $b$ and decrease that of currency $a$.  However, the larger network externalities also mean that if $a$  tries to deter $b$, users would respond more to its commitment than before, giving it a higher utility than sharing the market with $b$. This change in the users' reaction gives $a$ more incentive to internationalize its currency for a monopoly status. As a result, $a$ deters $b$ by choosing $e^*_a(\bm{w''}) = 1$ and obtains the whole market.

\section{A $T$-currency problem}
\label{sec:dp}
Our focus on the two-currency framework is motivated by the rivalry between the US and China. While this dyadic competition provides a benchmark to analyze network effects and issuers' strategies, 
in this section, we extend the model to a setting with finitely many currencies. We  provide a solution concept for the currency allocation problem of users, given the commitment levels of issuers. With a well-defined solution for the currency choice problem, we can find the SPE of the game by backward induction. The main insights from the two-currency case continue to hold in the extended model.

\subsection{A dynamic programming approach}

The  objective of users is to minimize their overall cost by allocating their transactions to $T$ currencies issued by $T$ issuers. Let the set of issuers be $E = \{1,2,..., T\}$.
First, we assume that each user has an order of currencies that they consider for their transaction settlement and think of the currency allocation as choices in a recursive problem. We show below that under our assumptions, the optimal currency allocation for users is unique and order-independent. Let currencies in  consideration  for user $i$ be in order $1, 2,..., T$. The state variable is the user's remaining transactions to be settled in each stage $t$, denoted as $m_{it}$, and the choice variable is the user's usage of each currency $t$, denoted as $x_{it}$. Given state $m_{i,t-1}$, user $i$ chooses its usage of currency $t$, thus decides $m_{it} = m_{i,t-1} - x_{it}$. Since user $i$ needs to settle a total of $m_i$ transactions, we set $m_{i0} = m_i$.
The perceived cost of using $x_{it}$ of currency $t$ for user $i$ is
\begin{equation*}
	\cC_{it} (\bm{x_t})
	=\frac{\beta}{2} x_{it}^2
	- w_{ij} x_{it} x_{jt}
	- f_{it} x_{it}
\end{equation*}
where $x_{it} = m_{i,t-1} - m_{it} $.
We denote the vector of usage of currency $t$ as $\bm{x_t}$.
The cost minimization problem of each user $i$ is
\begin{equation*}
	\label{total_c}
\min_{ x_{i1},..., x_{iT}} 
	\vV_i (\bm{x_1}, ..., \bm{x_T} ) = \sum_{t=1}^{T}\cC_{i}(\bm{x_t}) \quad
	 \text{s.t. } \sum_{t=1}^{T}x_{it} = m_i. 
\end{equation*}

We can then rewrite the problem in terms of the Bellman equation for each user $i$ as follows.
\begin{equation*}
	V_{i,t-1} (m_{i,t-1}) = \min_{0 \leq m_{it} \leq m_{i, t-1}} \{ \cC_{it}(m_{i,t-1} - m_{it}) + V_{it} (m_{it})\}
\end{equation*}
%
%
%
The FOC and the Envelope theorem yield the Euler equation
\begin{equation}  \label{euler_c}
	\cC_{it}'(x_{it}) = \cC_{i,t+1}'(x_{i, t+1}),
\end{equation}
which is equivalent to $
	\beta x_{it} - \sum_{j \neq i} w_{ij} x_{jt} - f_{it} =
	\beta x_{i, t+1} - \sum_{j \neq i} w_{ij} x_{j, t+1} - f_{i,t+1} $.
If $m_{it} \leq \sup \{m \in (0,m_{i0}] : \cC_{it}'(m) \leq \cC_{i,t+1}'(0) \}$, then the marginal cost of settling all the remaining transaction with currency $t$ is lower than the marginal cost of starting settling with $t+1$. Thus, user $i$ would choose to settle all the remaining transactions with currency $t$ without considering other currencies. 
Hence, when $\beta$ is sufficiently high such that $m_{i, t} \geq \sup \{m \in (0,m_{i0}] : \cC_{i,t}'(m) \leq \cC_{i,t+1}'(0) \}$ for all $i$ and for all $t$, the Euler equation (\ref{euler_c}) holds for all $i$ and for all $t$, which implies the solution is interior and the order of consideration of the currencies does not matter for the resulting usage of each currency by users.\footnote{When $\beta$ is sufficiently low, then some users may choose to settle all their transactions before considering $T$, thus the order of consideration of the currencies matters for the users' currency choice.} Therefore, we can identify the minimum value of each $\beta$, denoted $\underline{\beta}$, that ensures a unique, nonnegative, and order-independent solution for the users' currency choice problem.\footnote{It can be checked that the condition is satisfied when $\beta  \geq \max_{i \in N, (t, \tau) \in E^2} \frac{\sum_{j \neq i} w_{ij} m_{j} +f_{it} -f_{i \tau} } {m_i} $.} Suppose that $\beta \geq \underline{\beta}$. Then, we can rewrite the Euler equations in terms of matrices:
\begin{equation} \label{euler_matrix}
	\beta \bm{x_{t+1}} - \bm{w x_{t+1}} - \bm{f_{t+1}} 
	- \beta \bm{x_t} + \bm{w x_t} +  \bm{f_t} = \bm{0}.
\end{equation}
Hence, the ``law of motion" for currency usage is
\begin{equation} \label{lom}
	\bm{x_{t+1}}
	= \bm{x_t} + (\beta \mathbb{I} -  \bm{w})^{-1} (\bm{f_{t+1}} -  \bm{f_t}).
\end{equation}
For the currency usage to be well-defined, we need $\beta > r(\bm{w})$, where $ r(\bm{w})$ is the spectral radius of the weight matrix $\bm{w}$. 
From (\ref{lom}), each usage vector $\bm{x_t}$ can be written as a function of $\bm{x_1}$. Using the constraint $\sum_{t=1}^{T} \bm{x_t} = \bm{m}$ and some linear algebra, we can derive the unique solution for the usage of each currency by all users given the commitment levels of currency issuers.

So far, we have shown how the optimal currency allocation can be derived given the commitment levels of issuers. We can then find the SPE of game among $T$ issuers by backward induction. The following corollary shows that the main insights from the two-currency case continue to hold in this extended model.

\begin{corollary} \label{cor:T} \leavevmode
	\begin{itemize} 
		\item [(i)] Suppose that players are symmetric. As $k$ increases, there are less global currencies.
		\item[(ii)] Suppose that there are $T$ users, each prefers a distinct currency, i.e., $f_{ip}(e) = \mu f_{i,-p}(e)$. Let all other factors be symmetric. The issuer of the currency that is preferred by the user with a higher centrality commits more strongly and captures a larger market share.
		\item [(iii)] Suppose that players are symmetric. When the network is more integrated, there are less global currencies.
	\end{itemize}
\end{corollary}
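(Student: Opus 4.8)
The backbone of all three parts is the closed-form usage solution implied by the law of motion (\ref{lom}). Iterating (\ref{lom}) gives $\bm{x_t} = \bm{x_1} + (\beta\mathbb{I} - \bm{w})^{-1}(\bm{f_t} - \bm{f_1})$, and summing over $t$ under the constraint $\sum_{t=1}^T \bm{x_t} = \bm{m}$ pins down $\bm{x_1}$, yielding
\[
\bm{x_t} = \tfrac{1}{T}\bm{m} + (\beta\mathbb{I} - \bm{w})^{-1}\Big(\bm{f_t} - \tfrac{1}{T}\textstyle\sum_{s=1}^T \bm{f_s}\Big).
\]
Since $(\beta\mathbb{I} - \bm{w})^{-1} = \frac{1}{\beta}(\mathbb{I} - \frac{1}{\beta}\bm{w})^{-1}$ is exactly the (scaled) Katz--Bonacich operator of (\ref{katz}), this is the $T$-currency analog of Lemma \ref{lem:d_kappa}: each user's usage of currency $t$ is the equal split $m_i/T$ plus a centrality-weighted deviation of currency $t$'s liquidity from the cross-currency average. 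Taking $\bm{1}^\top$ gives total usage $X_t = M/T + \bm{1}^\top(\beta\mathbb{I} - \bm{w})^{-1}(\bm{f_t} - \bar{\bm{f}})$, the object entering each issuer's payoff (\ref{max_u}). With this in hand, every part reduces to re-running a two-currency argument on $u_t = X_t - k\,c(e_t)$.

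For parts (i) and (iii) the plan is to specialize to symmetric players, so $\bm{f_t} = f(e_t)\bm{1}$ and $X_t = M/T + \kappa_{\mathrm{tot}}\big(f(e_t) - \bar f\big)$ with $\kappa_{\mathrm{tot}} \coloneqq \bm{1}^\top(\beta\mathbb{I} - \bm{w})^{-1}\bm{1}$; equal commitment then yields equal shares $M/T$, reproducing the benchmark of Proposition \ref{prop:1stmover}. I would set up the backward induction over the $T$ sequential movers, where at each node an issuer either stays out ($e=\emptyset$, payoff $0$) or internationalizes anticipating the continuation shares of later movers. For part (i), because $u_t$ is strictly decreasing in $k$ for every fixed commitment profile, raising $k$ weakly tightens the entry condition at every node; I would argue by induction on the mover index that the number of issuers with positive best continuation payoff is weakly decreasing in $k$, producing a threshold ladder $\underline{k}_1 \ge \underline{k}_2 \ge \cdots$ that partitions $k$ into regions with fewer active currencies as $k$ rises. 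For part (iii), I would instead fix $k$ and invoke Lemma \ref{lemma:dX}: integrating the network raises $\partial X_t/\partial e_t$ (equivalently raises $\kappa_{\mathrm{tot}}$, since $\bm{w'}\gneq\bm{w}$ enlarges the Katz--Bonacich operator entrywise), so a deterring mover captures a larger share per unit of commitment, amplifying deterrence exactly as in Proposition \ref{prop:w} and pushing the thresholds down.

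For part (ii) I would drop symmetry and use the consecutive-difference form $\bm{x_{t+1}}-\bm{x_t} = (\beta\mathbb{I}-\bm{w})^{-1}(\bm{f_{t+1}}-\bm{f_t})$ together with the total-usage formula. With each of $T$ users favoring a distinct currency by the factor $\mu$, at equal commitments $\bm{f_t}-\bar{\bm{f}}$ is proportional to $\bm{e}_t - \frac{1}{T}\bm{1}$, so that $X_t - X_{t'} \propto \kappa_t - \kappa_{t'}$, where I use that for undirected $\bm{w}$ the operator is symmetric and $\bm{1}^\top(\beta\mathbb{I}-\bm{w})^{-1}\bm{e}_t = \frac{1}{\beta}\kappa_t$. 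Hence the currency preferred by the higher-centrality user attracts strictly more usage at equal commitment; feeding this back into the issuers' problem, the same marginal-incentive comparison used in Proposition \ref{prop:katz} shows this issuer has a strictly higher marginal return to commitment, so in the SPE it commits more and captures a larger share.

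The main obstacle I anticipate is the backward induction underlying parts (i) and (iii): unlike the two-mover game, where deterrence is a single binary comparison, with $T$ sequential issuers each node's payoff depends on the entire continuation, and I must show the deter-versus-accommodate decision is monotone along the sequence so that the thresholds $\underline{k}_j$ are genuinely ordered and the active-currency count is globally (not merely locally) monotone in $k$ and in network integration. Ruling out non-monotone entry patterns --- in which a later entrant finds entry profitable at a cost level that had already cleared an earlier one --- is the delicate step; the allocation side, by contrast, is handled entirely by the closed form above.
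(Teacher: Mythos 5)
Your proposal is correct and follows essentially the same route as the paper's own proof: the paper likewise reduces each part to re-running the two-currency arguments, via a ladder of deterring commitment levels decreasing in $k$ for (i), the pairwise version of \eqref{asy:kappa} (your computation $X_t - X_{t'} \propto \kappa_t - \kappa_{t'}$ using symmetry of $(\beta\mathbb{I}-\bm{w})^{-1}$) for (ii), and the sign $\partial u_s(e_t,\tilde{e})/\partial w_{gh} < 0$ from Proposition \ref{prop:w} for (iii). The delicate monotone-entry step you flag in the $T$-mover backward induction is asserted rather than established in the paper's proof as well, so your write-up is, if anything, more explicit about both the closed-form allocation and where the argument remains informal.
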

The mechanisms behind these results are analogous to those discussed in Section \ref{sec:results}. 

\subsection{Usage difference and the Gini coefficient}
In this section, we analyze the relationship between the currency choice of each user to their position in the network. Define $\bm{\gamma}_t \coloneqq \frac{1}{\beta} [\bm{f_t} - \bm{f_{t+1}} ]$, which captures the difference in marginal costs between $t$ and $t+1$, excluding the network effects. From equation  (\ref{lom}), the extent to which users choose currency $t$ over currency $t+1$ is captured by the usage difference
\begin{equation}
	\bm{d_t} \coloneqq  \bm{x_t} -  \bm{x_{t+1}} 
	= \left( \mathbb{I} -  \frac{1}{\beta} \bm{w} \right)^{-1}  \bm{\gamma}_t,
\end{equation}
which corresponds to an adjusted Katz-Bonacich centrality.
In fact, due to the order-independence of the solution, we can write the usage difference between any two currencies in an analogous way. That is, the difference in usage of any two currencies is an adjusted Katz-Bonacich centrality,
where the connection from a user to a trade partner is weighted by the difference in marginal costs that the partner faces when choosing between currencies. This gives us an insight into the measure of the diversity in each user's currency choice.
Let the Gini coefficient, originally introduced in \cite{gini1912variabilita}, measure the extent to which the allocation of usage among currencies deviates from a perfectly equal allocation. The coefficient is given by  the following formula
\begin{equation} \label{gini}
	G_i = \frac{\sum_{t=1}^{T} \sum_{u=1}^{T} |x_{it} - x_{iu}|}{2T\sum_{t=1}^{T} x_{it}}
	= \frac{\sum_{t=1}^{T} \sum_{u=1}^{T} |x_{it} - x_{iu}|}{2Tm_i}.
\end{equation}

\begin{proposition} \label{prop:katz_gini}
	Let $\tilde{\kappa}_{it} \coloneqq \frac{t(T-t) |d_{it}|}{Tm_i}$ be user $i$'s Katz-Bonacich centrality adjusted by its difference in marginal cost between $t$ and $t+1$ without network effects. Then, $G_i = \sum_{t=1}^{T-1} \tilde{\kappa}_{it}$.
\end{proposition}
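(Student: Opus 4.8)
The plan is to reduce the claimed identity to a purely combinatorial statement about the pairwise absolute differences of a single user's usage levels $x_{i1},\dots,x_{iT}$, and then to prove that statement by the classical ``Gini as a weighted sum of consecutive gaps'' argument. First I would observe that the double sum in (\ref{gini}) runs over all ordered pairs, so $\sum_{t=1}^T\sum_{u=1}^T |x_{it}-x_{iu}| = 2\sum_{t<u}|x_{it}-x_{iu}|$, and hence $G_i = \frac{1}{Tm_i}\sum_{t<u}|x_{it}-x_{iu}|$. It therefore suffices to show $\sum_{t<u}|x_{it}-x_{iu}| = \sum_{s=1}^{T-1} s(T-s)\,|d_{is}|$, since dividing by $Tm_i$ then reproduces $\sum_{s=1}^{T-1}\tilde{\kappa}_{is}$ by the definition of $\tilde{\kappa}_{is}$.

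The key preliminary step is to index the currencies so that user $i$'s usage is monotone. Because the solution to the allocation problem is order-independent (as established from the Euler equation (\ref{euler_c}) and the law of motion (\ref{lom}) under $\beta\ge\underline{\beta}$), relabeling the currencies leaves the multiset of usage values $\{x_{it}\}_t$ unchanged, and hence does not affect $G_i$. I would therefore relabel so that $x_{i1}\ge x_{i2}\ge\cdots\ge x_{iT}$. With this ordering every consecutive difference satisfies $d_{is}=x_{is}-x_{i,s+1}\ge 0$, so $d_{is}=|d_{is}|$, and for any $t<u$ one has $|x_{it}-x_{iu}| = x_{it}-x_{iu} = \sum_{s=t}^{u-1}d_{is}$ by telescoping.

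With the telescoping representation in hand, the remaining step is to exchange the order of summation and count multiplicities. Substituting gives $\sum_{t<u}|x_{it}-x_{iu}| = \sum_{t<u}\sum_{s=t}^{u-1} d_{is} = \sum_{s=1}^{T-1} d_{is}\,\#\{(t,u): t\le s< u\}$. A fixed gap index $s$ contributes to the pair $(t,u)$ exactly when $t\in\{1,\dots,s\}$ and $u\in\{s+1,\dots,T\}$, so the number of such pairs is $s(T-s)$. This yields $\sum_{t<u}|x_{it}-x_{iu}| = \sum_{s=1}^{T-1} s(T-s)\,d_{is} = \sum_{s=1}^{T-1} s(T-s)\,|d_{is}|$, and dividing by $Tm_i$ gives $G_i=\sum_{s=1}^{T-1}\tilde{\kappa}_{is}$ as claimed.

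I expect the main obstacle to be the monotonicity reduction rather than the counting. One must be careful that the factor $|d_{is}|$ in the statement is precisely what makes the identity order-free: the raw consecutive-gap identity fails for an unsorted sequence, because the telescoped differences would then carry mixed signs and no longer coincide with $|x_{it}-x_{iu}|$. Invoking order-independence to sort the usages, and noting that $|d_{is}|$ equals the sorted gap whether one sorts increasingly or decreasingly, is what legitimizes the reduction. Everything after that is the standard combinatorial identity for the mean absolute difference.
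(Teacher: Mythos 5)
Your central computation---telescoping $|x_{it}-x_{iu}|$ into consecutive gaps, counting that gap $s$ enters exactly $s(T-s)$ pairs $t\le s<u$, and dividing by $Tm_i$---is exactly the paper's own proof, which consists of the telescoping identity $x_{i\tau}-x_{i,\tau+c}=\sum_{t=\tau}^{\tau+c-1}d_{it}$ followed by ``the statement follows directly from \eqref{gini}.'' You are also right that a sign issue is hiding here: $|x_{it}-x_{iu}|=\sum_{s=t}^{u-1}|d_{is}|$ holds only when the gaps $d_{is}$ in that window share a sign, i.e., only when user $i$'s usage is monotone in the currency index. The paper's proof passes over this silently, and you deserve credit for flagging it.

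The problem is your repair. Relabeling the currencies so that the usages are sorted is not ``without loss of generality,'' because the right-hand side $\sum_{t}\tilde{\kappa}_{it}$ is not invariant under relabeling: sorting preserves the multiset $\{x_{it}\}_t$ (hence $G_i$) but not the multiset of absolute consecutive gaps. Take $T=3$, $m_i=1$, $(x_{i1},x_{i2},x_{i3})=(0,1,0)$: then $|d_{i1}|=|d_{i2}|=1$, so $\sum_t \tilde{\kappa}_{it}=\frac{1\cdot 2\cdot 1}{3}+\frac{2\cdot 1\cdot 1}{3}=\frac{4}{3}$, whereas $G_i=\frac{2}{3}$; after sorting to $(1,0,0)$ the gaps become $1$ and $0$ and the weighted sum becomes $\frac{2}{3}$. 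So your claim that ``$|d_{is}|$ equals the sorted gap'' is false whenever the original sequence is non-monotone, and the last link of your chain breaks. Indeed no argument can close it: the example shows the identity as literally stated fails for non-monotone usage profiles, which the model does admit when users rank liquidities differently across currencies (e.g.\ $f_{i2}>f_{i1}=f_{i3}$ with $\bm{w}$ near zero gives $x_{i1}<x_{i2}>x_{i3}$). What your sorted argument actually proves is the different (correct) statement that $G_i$ equals the weighted sum of the gaps of the \emph{sorted} sequence. The proposition holds exactly on the domain where the paper's one-line proof is valid: all $d_{it}$, $t=1,\dots,T-1$, share a common sign for each $i$, as happens when currencies are commonly ranked, i.e.\ $f_{it}\ge f_{i,t+1}$ for all $i,t$, so that $\bm{\gamma}_t\ge 0$ and $\bm{d_t}=(\mathbb{I}-\frac{1}{\beta}\bm{w})^{-1}\bm{\gamma}_t\ge 0$, as in the paper's five-currency example. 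On that domain your telescoping-and-counting argument is correct as written and needs no relabeling at all.
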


Proposition \ref{prop:katz_gini} states that the Gini coefficient of each user's currency allocation is a sum of Katz-Bonacich centralities adjusted by the difference in marginal costs for using currencies, excluding network effects. In other words, a country's currency allocation for trade settlement is more concentrated when it has a higher network centrality. 
This result highlights that users with more or stronger trade ties in the global trade network tend to concentrate its usage more heavily on a single or a few currencies. 
The intuition is similar to Lemma \ref{lem:d_kappa} for the two-currency case: a user's central position makes the gains from using a common medium of exchange more valuable, thereby reducing the incentives to diversify across currencies. We demonstrate this concentrated currency allocation in the following example.

\begin{figure}[ht!]
	\centering
	\includegraphics[width=0.3\textwidth]{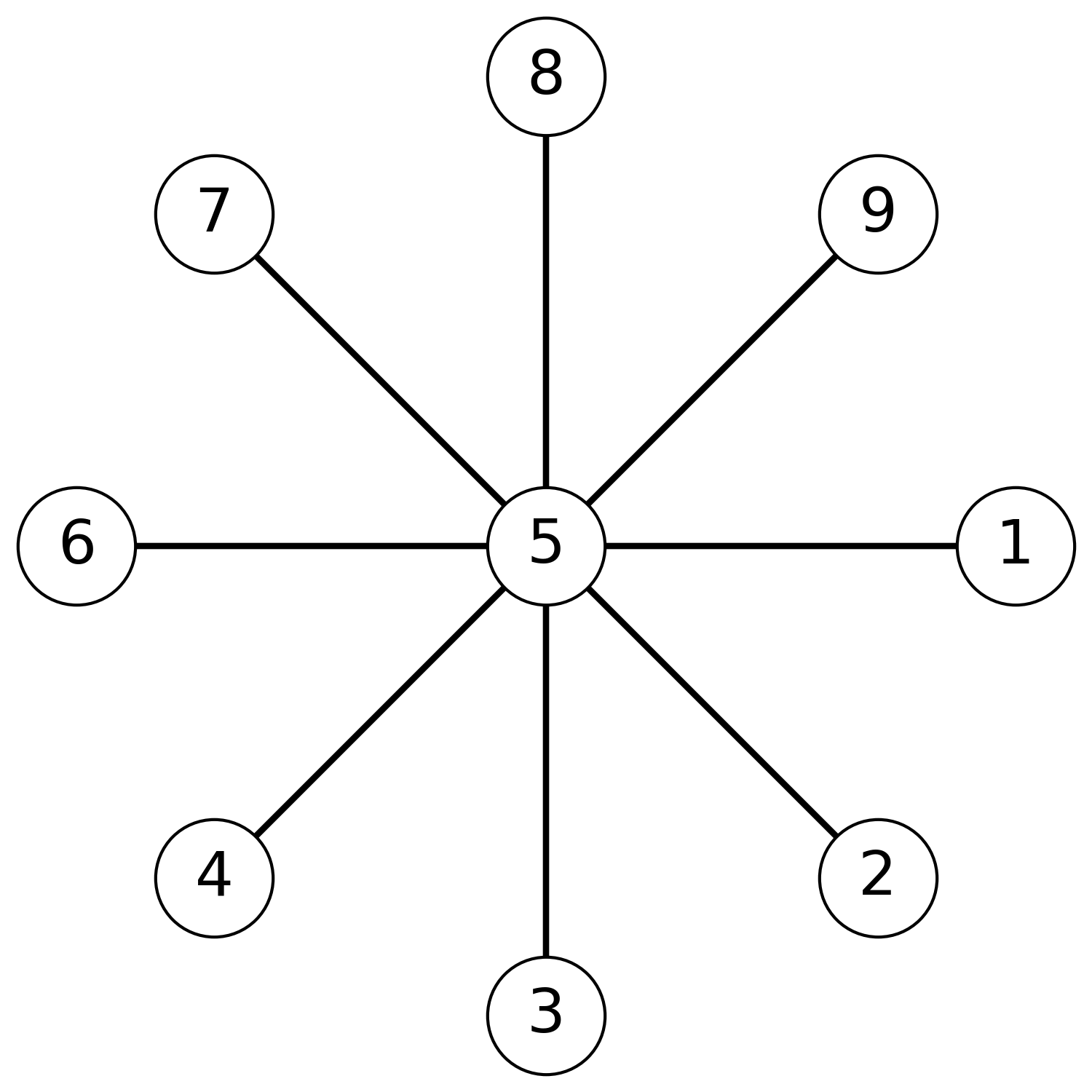}
	\caption{A star network}
	\label{fig:ex4}
\end{figure}

{\bf Example}: Consider the undirected network as depicted in Figure \ref{fig:ex4}, where each link has a weight of 0.125. User 5 is the most central node, while all other users have identical (lower) centrality. Suppose that there are five issuers (1 through 5) competing to attract global currency usage, and the commitments by issuers that enter later generate lower liquidity for users due to additional transaction and management costs for adopting new currencies into the payment system. Specifically, let the liquidity function for currency 1 be  $f_1(e) = e^{0.4}$ and $f_t(e) = 0.7^{t-1} f_{t-1}(e)$ for $t \in \{2,3,4,5\}$. Let $\beta = 2.1$, $k = 0.3$, and $c(e)=\exp (1.5 e)$. In the SPE outcome of this game, issuers that enter later choose lower commitment, and the last issuer chooses not to internationalize its currency. Specifically, the SPE outcome commitments are given by
$(e_1^*, e_2^*, e_3^*, e_4^*, e_5^*) = (0.86, 0.7, 0.54, 0.42, \emptyset)$. While facing identical preference, users' currency allocations differ depending on their network centrality. As shown in Figure \ref{fig:ex4_result}, the most central user (user 5) concentrates its usage on currency 1, while the others allocate their usage more evenly across multiple currencies. Indeed, the Gini coefficient of user 5's currency allocation is 0.42, whereas that of the other users is 0.3.

\begin{figure}[ht!]
	\centering
	\includegraphics[width=0.5\textwidth]{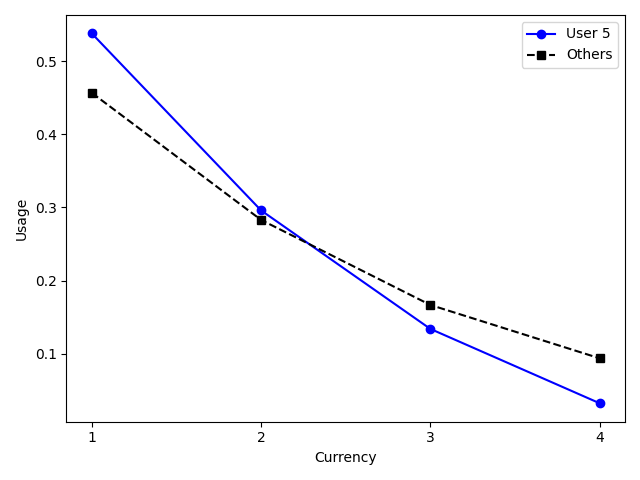}
	\caption{Currency usage: User 5 vs others}
	\label{fig:ex4_result}
\end{figure}

The above result has important implications. It suggests that central countries contribute disproportionately to reinforcing global currency dominance by concentrating their usage, affecting their trade partners' currency choices. In turn, this can deepen the entrenchment of incumbent currencies and raise the bar for challengers, particularly if central countries are biased toward the incumbent or resistant to adopting emerging currencies. Our model yields testable implications, for instance, it predicts a systematic relationship between trade network centrality and the degree of currency concentration across countries. It can thus serve as a guide for empirical studies examining how the structure of global trade affects international currency use over time.

\section{Conclusion}

We contribute to understanding the rise and fall of dominant international currencies by
developing a game-theoretic model where issuers and users interact in the presence of network spillover. The objective of issuers is to maximize the international usage of their currencies minus the cost of providing liquidity. Providing liquidity is costly but promotes the international usages of currencies. On the other hand,  the objective of users is to minimize the exchange rate risks associated with settling a given trade volume by choosing available international currencies. We assume that the exchange rate risks can be reduced by choosing a currency that trade partners choose and by the liquidity of that currency. 
Our framework is generalized to a setup with finite sets of issuers and users by the application of a dynamic programming approach.  

Our analysis reveals structural barriers to the internationalization of an emerging currency, explaining why being a dominant trade partner in bilateral trade is insufficient to achieve dominant currency status.
We find that the incumbent issuer has an advantage as it can deter the emerging currency by providing enough liquidity. This advantage is amplified when the trade network is more integrated. 
Here, the central users play a critical role as it is their preferences that matter most for choices of both users and issuers. We find that more central users concentrate their usage of currencies, while more peripheral  users diversify theirs. The issuer of a currency preferred by more central users increases the liquidity, which in turn leads to greater adoption of the currency. 
This feedback mechanism leads to our central insight on the interplay of liquidity provision and currency choice of users that determines the dominant currency. 

Our model can be used to explain 1) the US's first-mover strategy---establishing the Bretton Woods Institutions---to maintain the USD dominance and why an emerging issuer, such as China, faces prohibitive costs to establish itself as an issuer of a new global currency,
2) the increasing reliance on the USD for invoicing and settlement as international trade becomes more integrated, 3) how the US's high centrality and domestic user bias toward the USD incentivize global provision of liquidity. The mechanisms also elucidate China's restricted liquidity provision and the RMB's limited global adoption despite its trade dominance.

\clearpage
\appendix
\section*{Appendix: Proofs}

\begin{proof}[Proof of Lemma \ref{lem:d_kappa}]
	The proof follows directly from \eqref{foc:d} and \eqref{katz}.
\end{proof}

\begin{proof}[Proof of Lemma \ref{lemma:SPE}]
	To show the existence of the SPE, we show that the best responses exist at every subgame.
	First, consider $b$'s best response given any $e_a$ chosen by $a$.
	Define $\bm{A} \coloneqq  \sum_{\ell=0}^{\infty} \frac{1}{\beta^{\ell}} \bm{w}^{\ell}$. From (\ref{foc:usage}), the total usage of currency $b$ is $X_b (e_b) = \frac{M - \sum_{i \in N} \sum_{j \in N} A_{ij} \gamma_j (e_a, e_b)}{2}$. Thus, $b$'s utility is
	$u_b (e_b) =  \frac{M - \sum_{i \in N} \sum_{j \in N} A_{ij} \gamma_j (e_a, e_b)}{2} - k c(e_b),$	%
	which is concave as $ -\sum_{i \in N} \sum_{j \in N} A_{ij} \gamma_j(e_b)$  is a sum of concave functions and $c(e_b)$ is convex.  We find $b$'s best response to $a$'s commitment level $e_a \in [0,1]$ by taking the FOC w.r.t $e_b$
	\begin{equation} \label{BR_b}
		\frac{\partial u_b (e_a,e_b)}{\partial e_b} = \frac{1}{2 \beta} \sum_{j \in N} \sum_{i \in N} A_{ij} f_{jb}'(e_b) - k c'(e_b) = 0.
	\end{equation}
	Let $e_b'$ be the unique solution for (\ref{BR_b}). Then, $b$'s best response when $a$ internationalizes its currency is $\tilde{e}_b = \min \{1,  e_b'\}$ if $u_b(e_a, \tilde{e}_b) > 0$ and $\emptyset$ otherwise. When $a$ chooses not to internationalize its currency, $b$'s best response is $e_b = 0$.
	
	Next, consider when $a$ chooses an $e_a$ between [0,1] anticipating $b$'s best response, which is either $\tilde{e}_b$ or $\emptyset$. Since $u_b(e_a, \tilde{e}_b)$ is decreasing in $e_a$, there exists a value $\hat{e}_a \in \mathbb{R}$ such that $u_b(\hat{e}_a, \tilde{e}_b)=0$, thus $b$'s best response is $\tilde{e}_b$ if $e_a <\hat{e}_a$ and $\emptyset$ if $e_a \geq \hat{e}_a$. Hence, $a$'s utility function is 
	$u_a (e_a) =  \frac{M + \sum_{i \in N} \sum_{j \in N} A_{ij} \gamma_j (e_a, \tilde{e}_b)}{2} - k c(e_a)$ for  $e_a < \hat{e}_a$, and
	$u_a (e_a) =  M - k c(e_a)$ for  $e_a \geq \hat{e}_a$. Since $\lim_{e_a \rightarrow \hat{e}_a^-} = \frac{M + \sum_{i \in N} \sum_{j \in N} A_{ij} \gamma_j (\hat{e}_a, \tilde{e}_b)}{2} - k c(\hat{e}_a)  \leq M - k c(\hat{e}_a)$, $u_a (e_a)$ is upper-semicontinuous. By the Weierstrass maximum theorem, $a$'s choice set obtains its maximum values. $a$ compares the optimal commitment level $e_a$ and the option of not to promote its currency ($u_a(\emptyset) = 0$) and chooses its best response.
	Therefore, the best responses exist at every subgame and thus, the game has at least one SPE.
\end{proof}

\begin{proof}[Proof of Lemma \ref{lemma:k}]
	To analyze the SPE, we derive the best response of each issuer given the choice of the other. 
	Given $b$'s best response as derived in the Proof of Lemma \ref{lemma:SPE}, we can  analogously find $a$'s optimal commitment level $\tilde{e}_a$ when they share the market. However, it may be more beneficial for $a$ to choose a commitment level that makes $b$ give up. The commitment level for $a$ to deter $b$ is $\underline{e}_a = \hat{e}_a$ if $\hat{e}_a \geq 0$ and  $\underline{e}_a =0$ otherwise, which is not feasible if $\hat{e}_a > 1$. 
	$a$'s best response is $\underline{e}_a$ if $ u_a(\tilde{e}_a, \tilde{e}_b) \leq u_a(\underline{e}_a, \emptyset)$ (deterring is more beneficial) and $\hat{e}_a \leq 1$ (deterring is feasible). When it is not beneficial or not feasible for $a$ to deter $b$, its best response is $\tilde{e}_a$ if $u_a(\tilde{e}_a, \tilde{e}_b) > 0$, and $\emptyset$ otherwise.
	Since both issuers play their best responses in the SPE, possible SPE outcomes are
	\begin{equation*}
		(e_a^*, e_b^*) =  \left\{ \begin{array}{lcl}
			 (\underline{e}_a, \emptyset) & \mbox{if} &  u_a(\tilde{e}_a, \tilde{e}_b) \leq u_a(\underline{e}_a, \emptyset) \text{ and } \hat{e}_a \leq 1
			\\
			 (\emptyset, 0) & \mbox{if} &  u_a(\tilde{e}_a, \tilde{e}_b) \leq 0 \text{ and either } u_a(\tilde{e}_a, \tilde{e}_b) > u_a(\underline{e}_a, \emptyset) \text{ or } \hat{e}_a > 1
			\\
			(\tilde{e}_a, \tilde{e}_b) &  &  \mbox{otherwise.} 
		\end{array}\right.
	\end{equation*}
	Now we are ready to find the values of $k$ for each case above. 
	
	We first find the values of $k$ so that $(e^*_a, e^*_b) = ( \underline{e}_a,  \emptyset)$.
	We can check that when $k\to0$,  $(e^*_a, e^*_b) =(\tilde{e}_a(k), \tilde{e}_b(k)) =  (1,1)$ and both gain a non-negative total usage, thus $\hat{e}_a (k ) > 1$.
	Moreover, when $k=\frac{M}{c(0)}$, $u_b(0, \tilde{e}_b) < 0$, thus $\hat{e}_a < 0$ and $(e_a^*, e_b^*) = ( 0, \emptyset )$. 
	By the chain rule, $u_b(\tilde{e}_b;k)$ is decreasing in $k$, thus $\hat{e}_a$ is decreasing in $k$. Therefore, $\exists\hat{k}\in[0,\frac{M}{c(0)}]$ such that $\hat{e}_a(\hat{k}) = 1$ and $\hat{e}_a(k) < 1$ for any $k \in (\hat{k}, \frac{M}{c(0)}]$.
	Similarly, $\exists\check{k}\in[0,\frac{M}{c(0)}]$ such that $\hat{e}_a(\check{k}) = \tilde{e}_a(\check{k})$  and $\hat{e}_a(k) < \tilde{e}_a(k)$ for any $k \in (\check{k}, \frac{M}{c(0)}]$.
	Next, we find the values of $k$ such that $u_a(\tilde{e}_a, \tilde{e}_b; k) \leq u_a(\underline{e}_a, \emptyset; k)$. We only need  to consider the range of $k$ such that $\tilde{e}_a (k) = e'_a(k) \leq 1$ and $\hat{e}_a(k) \geq 0$.
	By applying the chain rule, 
	$\frac{\partial [u_a(\underline{e}_a, \emptyset; k) -  u_a(\tilde{e}_a, \tilde{e}_b; k)]}{\partial k} = 
	-k \left[\frac{\partial c(\underline{e}_a)}{\partial \underline{e}_a} \frac{\partial \underline{e}_a}{\partial k} - \frac{\partial c(\tilde{e}_b)}{\partial \tilde{e}_b} \frac{\partial \tilde{e}_b}{\partial k} \right]  
	- [c(\underline{e}_a) - c(\tilde{e}_a)]$,
	which is negative when $\underline{e}_a(k) > \tilde{e}_a(k)$ and is positive when $\underline{e}_a(k) < \tilde{e}_a(k)$, that is, when $k > \check{k}$.
	Note that when $k \rightarrow 0$, $u_a(\underline{e}_a, \emptyset) \rightarrow M > u_a(\tilde{e}_a, \tilde{e}_b)$; when $\hat{e}_a(k) = 0$, $u_a(\underline{e}_a, \emptyset) > u_a(\tilde{e}_a, \tilde{e}_b)$. Hence, $u_a(\tilde{e}_a, \tilde{e}_b)$ and $u_a(\underline{e}_a, \emptyset)$ may intersect at at most two points, and $u_a(\tilde{e}_a, \tilde{e}_b) \leq u_a(\underline{e}_a, \emptyset)$ for any $k$ larger than the highest intersection point. (If they do not intersect, this inequality is always satisfied.)
	Note that in the case the two functions intersect at two points, $\hat{k}$ has to be higher than the lower intersection point. This is because $\hat{e}_a(\hat{k}) = 1 > \tilde{e}_b(\hat{k})$, which implies $\left. \frac{\partial [u_a(\underline{e}_a, \emptyset; k) -  u_a(\tilde{e}_a, \tilde{e}_b; k)]}{\partial k} \right|_{k=\hat{k}} > 0 $.
	%
	Let $\dot{k} \geq \hat{k}$ be the minimum $k$ such that both inequalities hold for any $k \geq \dot{k}$. If $k \in [\dot{k}, \frac{M}{c(0)}]$, then $(e^*_a, e^*_b) = ( \underline{e}_a,  \emptyset )$.
	%

	Next, we find the values of $k$ such that $(e^*_a, e^*_b) = (\emptyset, 0 )$, which occurs when $a$  gets a negative utility by playing $\tilde{e}_a$ and not being able to deter $b$. By the chain rule, $u_a (\tilde{e}_a, \tilde{e}_b; k)$ is decreasing in $k$. Consider $k'$ such that $u_a (\tilde{e}_a, \tilde{e}_b; k') = 0$. Such $k'$ is in $[0,  \frac{M}{c(0)})$ since $u_a(\tilde{e}_a, \tilde{e}_b;k) < 0$ when $k =  \frac{M}{ c(0)}$.
	Then, $u_a(\tilde{e}_a, \tilde{e}_b;k) \leq 0$ given any $k \in [k', \frac{M}{c(0)}]$. When $k' \geq \dot{k}$, $\nexists k$ such that $(e^*_a, e^*_b) = (\emptyset,  0)$. When $k' < \dot{k}$,  if $k \in [k', \dot{k})$, then $(e_a^*, e_b^*) = (\emptyset, 0)$. 
	Let $\underline{k} = \min \{k', \dot{k}\}$. If $ k \in [\underline{k}, \frac{M}{c(0)}]$, then either $e_a^*=\emptyset$ or $e_b^*=\emptyset$. If $k \in [0, \underline{k})$, then  $(e_a^*, e_b^*) = ( \tilde{e}_a, \tilde{e}_b )$.
\end{proof}

\begin{proof}[Proof of Proposition \ref{prop:1stmover}]
	First, we show that when players are symmetric, $\nexists k$ such that  $(e^*_a, e^*_b) = (\emptyset,  0)$.
	When players are symmetric, $\tilde{e}_a(k) = \tilde{e}_b(k) = \tilde{e}(k)$ for any $k$.
	Consider $k'$ such that $u_a (\tilde{e}, \tilde{e}; k') = u_b (\tilde{e}, \tilde{e}; k') = 0$ as defined in the proof of Lemma \ref{lemma:k}. 
	From (\ref{foc:d}), $d_i(\tilde{e}(k'), \tilde{e}(k')) = 0$ for all $i$, thus
	 $X_a (\tilde{e}(k'), \tilde{e}(k')) = \frac{M}{2}$. Hence, $a$ chooses $\underline{e}_a = \tilde{e}(k')$ and deters $b$. Therefore, $k' \geq \dot{k}$. This implies that $\nexists k$ such that  $(e^*_a, e^*_b)  = (\emptyset,  0)$.
	 
	Next, we derive the case when $a$ deters $b$ with zero commitment level. Let $\bar{k}$ be $k$ such that $\hat{e}_a(\bar{k}) = 0 $.  Since $\hat{e}_a$ is decreasing in $k$,  if $\bar{k} \leq k  \leq \frac{M}{c(0)}$, then $\{ e_a^*, e_b^*\} = \{0, \emptyset\}$ and $\{X_a^*, X_b^*\} = \{M, 0\}$.
	By Lemma \ref{lemma:k},  
	if $k \in [\underline{k}, \bar{k})$, then 
	$0 < e_a^* \leq 1 , e_b^* = \emptyset, X_a^* = M$ and $X_b^* = 0$.
	If $k \in [0, \underline{k})$, then 
	$e_a^*= e_b^* = \tilde{e}(k) \in [0, 1]$. 

	Next, we show that $\underline{k}$ is increasing in $\beta$ by proving both $u_a(\tilde{e}, \tilde{e}) - u_a(\underline{e}_a, \emptyset) = \frac{-M}{2} + k[c(\underline{e}_a(k, \beta))-c(\tilde{e}(k, \beta))]$ and $\beta$, $u_b(1, \tilde{e})$ are increasing in $\beta$. Since $\underline{k} \leq k'$,
	 $u_a(\tilde{e},\tilde{e}; \underline{k}) = u_b(\tilde{e},\tilde{e}; \underline{k}) \geq 0$, which implies $\underline{e}_a (\underline{k}, \beta) = \hat{e} (\underline{k}, \beta) \geq \tilde{e} (\underline{k}, \beta)$.
	From (\ref{BR_b}), $\tilde{e}$ is increasing in $\beta$. Consider any $k \in [\underline{k}, k')$. As $\beta$ increases, $\hat{e}_a (k, \beta)$ (and thus $\underline{e}_a(k, \beta) $) increases more than $\tilde{e}(k, \beta)$ so that $u_b(\hat{e}_a, \tilde{e}) = 0$, implying $k[c(\underline{e}_a(k, \beta))-c(\tilde{e}(k, \beta))]$ is increasing in $\beta$. 
	Since $d_i(1, e_b) \geq 0$ is decreasing in $\beta ~\forall i$, by the envelope theorem, $u_b(1, \tilde{e})$ is increasing in $\beta$. Therefore, $\underline{k}$ is increasing in $\beta$.
	
	Finally, we show that $\underline{k}$ is increasing in $M$. From (\ref{foc:d}) and (\ref{BR_b}), an increase in $M$ does not change $\bm{d}(e_a, e_b)$ and $\tilde{e}$, which means $u_b(e_a, \tilde{e})$ is increasing in $M$, implying $\underline{e}_a$ and $u_b(1, \tilde{e})$ are increasing in $M$. Hence,  both $k(c(\underline{e}_a(k))-c(\tilde{e}(k))]$ and  $u_b(1, \tilde{e})$ are increasing in $M$, implying $\underline{k}$ is increasing in $M$.
\end{proof}

\begin{proof}[Proof of Lemma \ref{lemma:dX}]
	Suppose that $\bm{w'}  \gneq \bm{w}$. Define $\bm{A} \coloneqq \sum_{\ell=0}^{\infty} \frac{1}{\beta^{\ell}} \bm{w}^{\ell}$ and  $\bm{A'} \coloneqq \sum_{\ell=0}^{\infty} \frac{1}{\beta^{\ell}} \bm{w'}^{\ell}$.
	Since $A'_{ij} \geq A_{ij}$ for all $(i,j)\in N^2$ with at least one inequality, from (\ref{foc:d}) and (\ref{foc:usage}), for any $p \in \{a, b\}$,
	\begin{equation} \label{derriv:w}
		\frac{\partial X_p (\bm{w'})}{\partial e_p} - \frac{\partial X_p (\bm{w})}{\partial e_p}  =\frac{1}{2 \beta} \sum_{i \in N} \sum_{j \in N} (A'_{ij} - A_{ij}) \frac{\partial f_{jp}(e_p)}{\partial e_p}>0.
	\end{equation}

	%
	
	Suppose that $f_{ip}(e) = \mu_i f_{i,-p}(e) ~\forall i\in N$. From (\ref{derriv:w}), for any $e\in [0,1]$,
	\begin{align*}
		\frac{\partial X_p (\bm{w'})}{\partial e_p}  - \frac{\partial X_p (\bm{w})}{\partial e_p}  = \frac{1}{2 \beta} \sum_{i \in N} \sum_{j \in N} (A'_{ij} - A_{ij}) \mu_j  \frac{\partial f_{j,-p}(e_{-p})}{\partial e_{-p}}
		> \frac{\partial X_{-p} (\bm{w'})}{\partial e_{-p}}  - \frac{\partial X_{-p} (\bm{w})}{\partial e_{-p}}.
	\end{align*}
\end{proof}

\begin{proof}[Proof of Proposition \ref{prop:katz}]
	Suppose that $k < \underline{k}$. Then, we have $e_p^* = \tilde{e}_p(\bm{w})$ for any $p \in \{a, b\}$. Without lost of generality, suppose that  $f_{ia}(e) = f_{jb}(e) = \mu f_{ib}(e) = \mu f_{ja}(e)$ with $\mu > 1$. Consider vector $\bm{\gamma}$ in (\ref{foc:d}). We have 
	\begin{align*}
		\gamma_h(e_a, e_b) & = \frac{f(e_a) - f(e_b)}{2\beta} \coloneqq \bar{\gamma} \\
		\gamma_i(e_a, e_b) & = \bar{\gamma} + \Delta_i (e_a, e_b) \\
		\gamma_j(e_a, e_b) & = \bar{\gamma} + \Delta_j (e_a, e_b)
	\end{align*}
	where $\Delta_i (e_a, e_b) \coloneqq \frac{f_{ia}(e_a) - f(e_a) - f_{ib}(e_b) + f(e_b)}{2\beta}$ and $\Delta_j (e_a, e_b) \coloneqq  \frac{f_{ja}(e_a) - f(e_a) - f_{jb}(e_b) + f(e_b)}{2\beta}$ measure the extent to which $i$ prefers $a$ and $j$ prefers $b$ compared to other users.
	Then, (\ref{foc:d}) becomes $\bm{d} = [\kappa_g \bar{\gamma} + A_{gi} \Delta_i + A_{g_j} \Delta_j]_{g=1}^n$. Since $\sum_{g=1}^n A_{gi} = \sum_{g=1}^n A_{ig} = \kappa_i$, the sum of all entries of $\bm{d}$ gives
	\begin{equation} \label{asy:kappa}
		X_a(e_a, e_b) - X_b(e_a, e_b) = \sum_{g=1}^n \kappa_g \bar{\gamma} + \kappa_i \Delta_i (e_a, e_b) + \kappa_j \Delta_j (e_a, e_b).
	\end{equation}
	Suppose that $\kappa_i > \kappa_j$. From (\ref{asy:kappa}), $X_a > X_b$ when $e_a=e_b$. Moreover, from (\ref{foc:usage}) and (\ref{asy:kappa}), $\frac{\partial X_a}{\partial e_a} \big|_{e_a = e} - \frac{\partial X_b}{\partial e_b}\big|_{e_b = e} = \frac{1}{4 \beta} [(\kappa_i - \kappa_j) [(\mu-1)f'_{ib}(e)]] > 0$
	for any $e$. Hence, $\tilde{e}_a \geq \tilde{e}_b$. Therefore, $e^*_a \geq e^*_b$ (equality holds only if $e^*_a = e^*_b = 1$ when $k$ is sufficiently close to 0) and $X^*_a > X^*_b$.
\end{proof}

\begin{proof}[Proof of Proposition \ref{prop:w}]
	Consider the game with $k \in [0, k')$.
	%
	%
	When the players are symmetric, $\gamma_{i} (e_a, \tilde{e})=\gamma (e_a, \tilde{e}) = \frac{f(e_a)-f(\tilde{e}(k))}{\beta} > 0$ for any $e_a \in (\tilde{e}(k),1]$ and for any $i$. Hence, $d_i = \gamma(e_a, \tilde{e}) \kappa_i > 0$ for any $i$. From (\ref{max_u}) and (\ref{foc:usage}), $u_b(e_a, \tilde{e}) = \frac{M-  \gamma(e_a, \tilde{e}) \sum_{i \in N} \kappa_i}{2} - kc(\tilde{e}(k)$. Since $\tilde{e}(k)$ is $b$'s best response to $e_a$, by the envelope theorem
	\begin{equation*}
		\frac{\partial u_b( e_a, \tilde{e})} {\partial w_{gh}} = - \frac {\gamma(e_a, \tilde{e})} {2} \sum_{i \in N} \frac{\partial  \kappa_i}{\partial w_{gh}} < 0
	\end{equation*}
	for any pair  $(g, h)$. 
	Since $\frac{\partial u_b( e_a, \tilde{e})} {\partial w_{gh}}<0$ holds for  any $e_a \in (\tilde{e}(k),1]$,  $\underline{e}_a(\bm{w'}, k) < \underline{e}_a( \bm{w}, k)$, which means $a$ can deter $b$ with a lower commitment level. 
	From Lemma \ref{lemma:dX}, $\tilde{e}(\bm{w'})>\tilde{e}(\bm{w})$ and thus  $u_a(\tilde{e}, \tilde{e}) - u_a(\underline{e}_a, \emptyset)$ is decreasing in $w_{gh}$. Therefore, $\underline{k}$ is decreasing in $w_{gh}$.
\end{proof}

\begin{proof}[Proof of Corollary \ref{cor:T}]
	(i) 
	Following analogous arguments as in the proof of Proposition \ref{prop:1stmover}, when $k$ is close to 0, all issuers plays $\tilde{e}(k)$ and share the market equally. Given any choices of the first $T-2$ issuers, issuer $T-1$ has a deterring commitment level $\underline{e}_{T-1}$ that is decreasing in $k$ and $e_t$ for any $t<T-1$. Similarly, issuer $T-2$ has deterring commitment levels $\underline{e}_{T-2}^1$ (with which $T-2$ and $T-1$ - choosing $\underline{e}_{T-1}$ - deter $T$) and $\underline{e}_{T-2}^2$ (with which $T-2$ deters both $T-1$ and $T$) that are decreasing in $k$, and so forth. Hence, when $k$ is higher, less issuers deter the entry of new currencies. Note that each of them may choose different commitment levels, for example, the first mover chooses a commitment level just enough for the second one to deter the remaining currencies. It follows that we may have asymmetric market shares when incumbent issuers decide their commitment levels to deter the emerging issuers.
	
	(ii) When there are $T$ currencies, (\ref{asy:kappa}) holds for any pair of currencies. This means users are more responsive to the commitment by issuers of the currencies prefered by more central users, which reinforces such issuers to commit strongly.
	
	(iii) When there are $T$ currencies, by analogous arguments as in the proof of Proposition \ref{prop:w}, $\frac{\partial u_s( e_t, \tilde{e})} {\partial w_{gh}} < 0$ for  any $e_t \in (\tilde{e}(k),1]$ and for any $s > t$, meaning that given the choices of first $t-1$ issuers,  $t$ can deter the remaining issuers with a lower commitment level. Thus, for any given $k$, the number of global currency issuers (weakly) decreases as the network becomes more integrated.
\end{proof}

\begin{proof} [Proof of Proposition \ref{prop:katz_gini}]
	Since $x_{i\tau} - x_{i, \tau+c} = \sum_{t=\tau} ^{\tau+c-1} d_{it} $  for any natural number $c \in [1, T-\tau]$, the statement follows directly from \eqref{gini}.
\end{proof}

\bibliographystyle{ecta}
\bibliography{nw_currency} 
\end{document}